\newcommand{\mi}{\mathrm{i}}
\newenvironment{proof}{\par{\itshape Proof}.\ }
\newtheorem{theorem}{\indent Theorem}
\newtheorem{lemma}[theorem]{\indent Lemma}
\newtheorem{proposition}[theorem]{\indent Proposition}
\newtheorem{definition}[theorem]{\indent Definition}
\newtheorem{problem}{Problem}
\begin{document}

\begin{frontmatter}
\title{On the capability of a class of quantum sensors\thanksref{footnoteinfo}} %

\thanks[footnoteinfo]{This work was supported by the Australian Research Council's Discovery Projects funding scheme under Projects DP190101566 and DP180101805, the U.S. Office of Naval Research Global under Grant N62909-19-1-2129 and the Air Force Office of Scientific Research and the Office of Naval Research Grants under agreement number FA2386-16-1-4065.}
\author[ADFA,ANU]{Qi Yu}\ead{yuqivicky92@gmail.com},
\author[ADFA,Griffith]{Yuanlong Wang}\ead{yuanlong.wang.qc@gmail.com},
\author[ADFA]{Daoyi Dong}\ead{daoyidong@gmail.com},
\author[ANU]{Ian R. Petersen}\ead{i.r.petersen@gmail.com}

\address[ADFA]{School of Engineering and Information Technology, University of New South Wales, Canberra, ACT 2600, Australia}
\address[ANU]{Research School of Electrical, Energy and Materials Engineering, Australian National University, Canberra, ACT 2601, Australia}
\address[Griffith]{Centre for Quantum Dynamics, Griffith University, Brisbane, QLD 4111, Australia}

\begin{keyword}                           
Quantum sensor, spin chain system, similarity transformation approach, quantum sensing, quantum system identification.
\end{keyword}                             

\begin{abstract}                          
Quantum sensors may provide extremely high sensitivity and precision to extract key information in a quantum or classical physical system. A fundamental question is whether a quantum sensor is capable of uniquely inferring unknown parameters in a system for a given structure of the quantum sensor and admissible measurement on the sensor. In this paper, we investigate the capability of a class of quantum sensors which consist of either a single qubit or two qubits. A quantum sensor is coupled to a spin chain system to extract information of unknown parameters in the system. With given initialisation and measurement schemes, we employ the similarity transformation approach and the Gr$\ddot{\text{o}}$bner basis method to prove that a single-qubit quantum sensor cannot effectively estimate the unknown parameters in the spin chain system while the two-qubit quantum sensor can. The work demonstrates that it is a feasible method to enhance the capability of quantum sensors by increasing the number of qubits in the quantum sensors for some practical applications.
\end{abstract}

\end{frontmatter}

\section{Introduction}
The development of emerging quantum technologies has shown powerful potential in many applications of estimating, identifying, simulating and controlling microscale physical and chemical systems \cite{sensing2017Degen,shi2016eaching,zorzi2014minimum,Yonezawa2012quantum,hou201614qubit,Di2008}. Among quantum technologies, quantum sensing technology has been recognized as a critical advanced technology to achieve extremely high sensitivity and precision in measuring a physical (quantum or classical) quantity using a quantum apparatus. Several quantum systems such as spin systems and trapped ions have been used to develop quantum sensors for detecting magnetic fields, electric fields or temperature \cite{Bonato2017,sensing2017Degen,rotation2017Campbell,Pog2018,sensing2013shi,qibo2017adaptive}. Quantum sensors will also have wide applications in quantum system identification \cite{junzhang2014,Burgarth2017Hamiltonian,Bonnabel2009Observer,yuanlong2019Automatica,Yonezawa2012quantum,xiang2011entangle,Levitt2018Power,Shu2016Hamiltonian,Fu2016Hamiltonian,hou201614qubit}, quantum control \cite{Pog2018,zhang2017feedback,guofeng2012survey,Dong2019learning,coherent1999Nakamura,Dong2010survey,yuguo2019vanishing,chuncun2020Attosecond,Wiseman2010book} and quantum network design \cite{Yuzuru2014,Chris2005}. A lot of effort has been devoted to the design of new quantum sensors and the enhancement of the sensitivity and precision of quantum sensors \cite{rotation2017Campbell,sensing2017Degen,Pog2018,sensing2013shi,qiyu2020IFAC}. This paper focuses on analysing the capability of a class of quantum sensors.

In this paper, we define the capability of a quantum sensor as whether the data obtained by the sensor can estimate all of the unknown parameters one is interested in. In particular, we consider that the quantum sensor consists of either a single qubit (single qubit sensor) or a two-qubit system (two qubit sensor). Qubit systems are a class of fundamentally important systems with wide applications, especially in quantum information processing. The capability of qubit sensors is related to the structure of the sensors and the measurement schemes on the sensors. As an illustration, we consider the object system as a spin-$\frac{1}{2}$ chain system with given Hamiltonian structure. The objective is to estimate unknown parameters in the Hamiltonian by coupling a quantum sensor to the system and making measurement on the sensor. With given measurement and initialisation schemes, we prove that a single-qubit quantum sensor cannot effectively estimate the unknown parameters in the spin chain system while the two-qubit quantum sensor can. The results demonstrate that it is a feasible method to enhance the capability of quantum sensors by increasing the number of qubits in the quantum sensors in some practical applications. It is also worth mentioning that our work has close connection with several existing results on Hamiltonian identifiability and Hamiltonian identification. For example, the Hamiltonian identification problem has been investigated in \cite{junzhang2014,Leghtas2011Hamiltonian,Burgarth2012identification,Burgarth2009indirect,Burgarth2009Coupling,Bris2007Hamiltonian,Franco2009Hamiltonian,Burgarth2017Hamiltonian,Bonnabel2009Observer,yuanlong2019Automatica,xiang2011entangle,Levitt2018Power,Fu2016Hamiltonian}. Ref. \cite{yuanlong2020,Akira} consider the Hamiltonian identifiability problem.

The remainder of this paper proceeds as follows. In Section \ref{section2}, we give a formulation of the sensing problem and briefly introduce qubit sensors and the object system. The task of determining the capability of a sensor is summarized into two problems under two different measurement schemes. Then the problems are answered in Section \ref{section3} and Section \ref{section4}, respectively. Conclusions are presented in Section \ref{section5}.
\section{Preliminaries and problem formulation}\label{section2}
\subsection{Qubit sensor}

Qubit systems may serve as excellent sensors for information detection \cite{Akira,yuanlong2020,Pog2018}. A qubit can be a spin-$\frac{1}{2}$ system, a two-level atom or a particle in a double-well potential. In particular, we study a class of qubit sensors where several selected qubits are coupled to the object system in a specially designed way to ensure the dynamics of the sensor and the object system interact with each other. The information of interest of the object system can be acquired by probing (i.e., measuring) the sensor.


 A two dimensional complex valued vector can describe the state of a qubit, which can be further expressed as the following linear combination
 \begin{equation}
| \psi \rangle=\alpha |0\rangle + \beta |1\rangle.
 \end{equation}
 Here, $\alpha$ and $\beta$ are complex numbers satisfying the relationship $|\alpha|^2+|\beta|^2=1$,
and we may denote the $|0\rangle$ and $|1\rangle$ states as
\begin{equation}
|0\rangle=( 1\  0 )^T,  \quad
|1\rangle=( 0\  1 )^T.
\end{equation}
The following Pauli matrices
\begin{equation}
\sigma_x =\left(
\begin{array}{cc}
0 & 1\\
1 & 0\\
\end{array}\right),   \quad
\sigma_y =\left(
\begin{array}{cc}
0 & -\mi\\
\mi & 0\\
\end{array}\right),  \quad
\sigma_z =\left(
\begin{array}{cc}
1 & 0\\
0 & -1\\
\end{array}\right),
\end{equation}
together with the identity matrix $I$, form a complete basis for the observable space of a qubit. Later on, we also write $X\coloneqq \sigma_x$, $Y\coloneqq \sigma_y$ and $Z\coloneqq\sigma_z$.

The Pauli operators are non-commuting which means that Pauli measurements are not compatible with each other. The eigenstates of the Pauli matrices, corresponding to their eigenvalues $\pm 1$, are as follows
\begin{align}
  &|\psi_{x+}\rangle = \frac{1}{\sqrt{2}} (1\ 1)^T, & &|\psi_{x-} \rangle = \frac{1}{\sqrt{2}}(1 \ -1)^T, \\
  &|\psi_{y+} \rangle = \frac{1}{\sqrt{2}} (1 \ \mi)^T, & &|\psi_{y-} \rangle = \frac{1}{\sqrt{2}} (1 \ -\mi)^T, \\
  &|\psi_{z+} \rangle = (1\ 0)^T,                  & &|\psi_{z-}\rangle  = (0 \ 1)^T.
\end{align}
We assume that the initial state of every single qubit for the Q-sensor can only be prepared in an eigenstate of $\sigma_x$ (e.g., $|\psi_{x+} \rangle$). 

 The sensor coupling with the object system should be appropriately designed to effectively detect the information of interest. In this paper, we couple the sensor to the object system in the form of a string (See Fig. \ref{1-spin sensor} and Fig. \ref{2-spins sensor}) \cite{Akira,yuanlong2020}.

 The measurement capability of the sensing qubits is also essential. Specially, we assume that one can implement the measurement of $\sigma_y$ or $\sigma_z$ but not the measurement of $\sigma_x$. For example, the measurement for a single-qubit sensor can be either $\sigma_y$ or $\sigma_z$; the measurement on a two-qubit sensor can be $I\otimes\sigma_y$, $I\otimes\sigma_z$, $\sigma_y\otimes\sigma_y$, $\sigma_y\otimes\sigma_z$, $\sigma_z\otimes\sigma_y$ or $\sigma_z\otimes\sigma_z$. The readout of the sensor provides us with information on the object system. These assumptions and constraints may come from practical engineering applications of Q-sensors. For example, the manufacturer may calibrate the initial state of a Q-sensor in a specific state that can be reset and allow users to make specific measurements. However, our analysis can also be extended to other cases with different assumptions.


\subsection{Spin chain system}
\label{2.1spinsystem}

In this section, we provide a general description for the dynamics of a spin-$\frac{1}{2}$ chain system. Identifying the unknown coupling strength in a spin system is one of the basic tasks to study and control a range of quantum phenomena \cite{Jurcevic2014manybody,Paola2011,Chris2005}. The chain system is usually regarded as a black box, which means it can not be directly manipulated or measured (see the right part of the box in Fig. \ref{1-spin sensor} and Fig. \ref{2-spins sensor}), even if some limited prior knowledge on the structure of the object system can be obtained. Therefore, the initial state of the object system is assumed to be at the maximally mixed state. The spins in the object chain system only interact with their adjacent spins. Hence, we assume that the only practical way to access a chain system is through a quantum sensor, which can be both initialized and measured.

  \begin{figure}	
  	\centering		
  	\includegraphics[width=8cm]{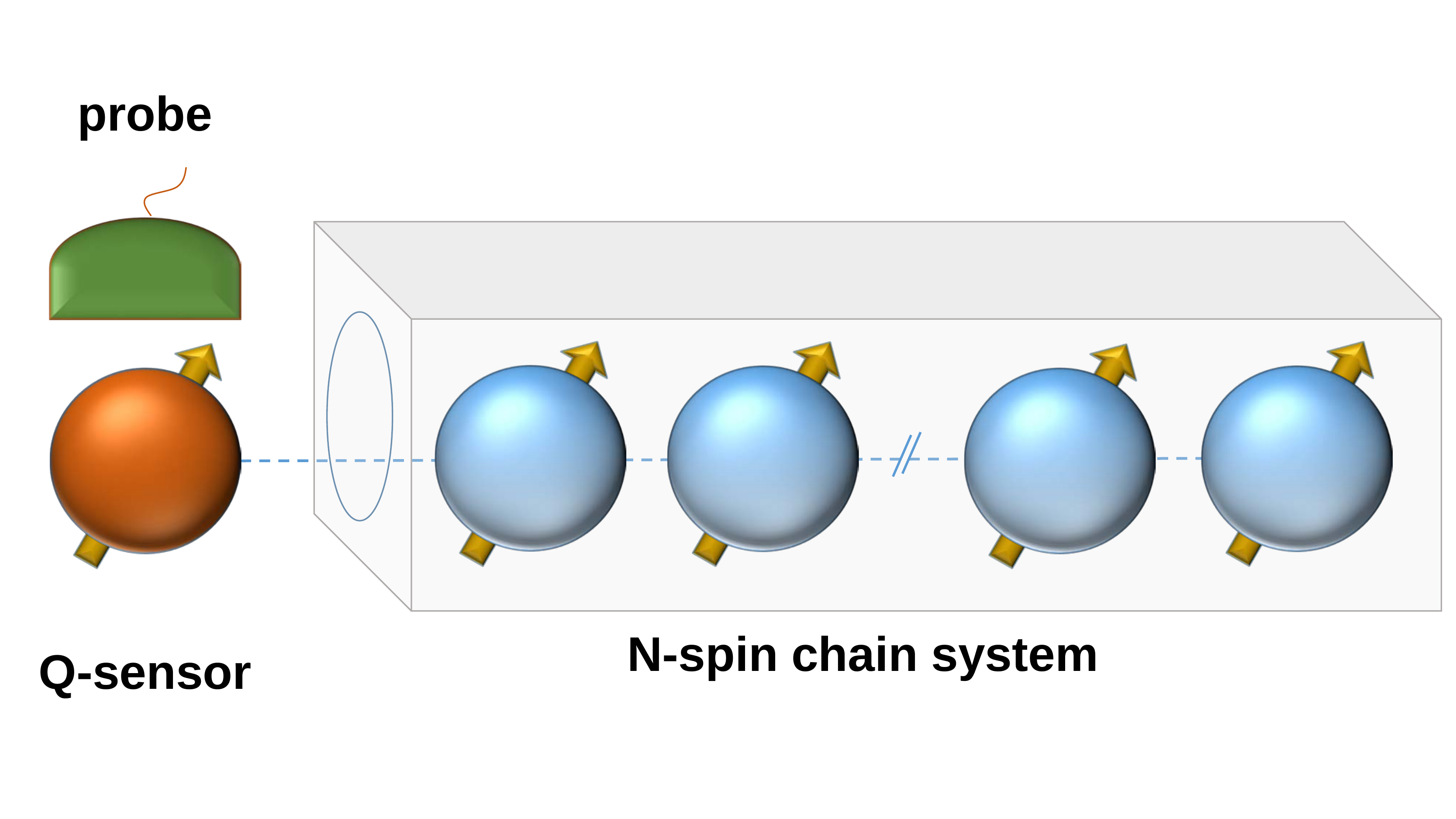}		
  	\caption{Schematic of a spin chain system connected a single-qubit sensor. The Q-sensor is measured by a probe. The object system consists of $N$ spins. }
  	\label{1-spin sensor}		
  \end{figure}
  \begin{figure}	
  	\centering		
  	\includegraphics[width=8cm]{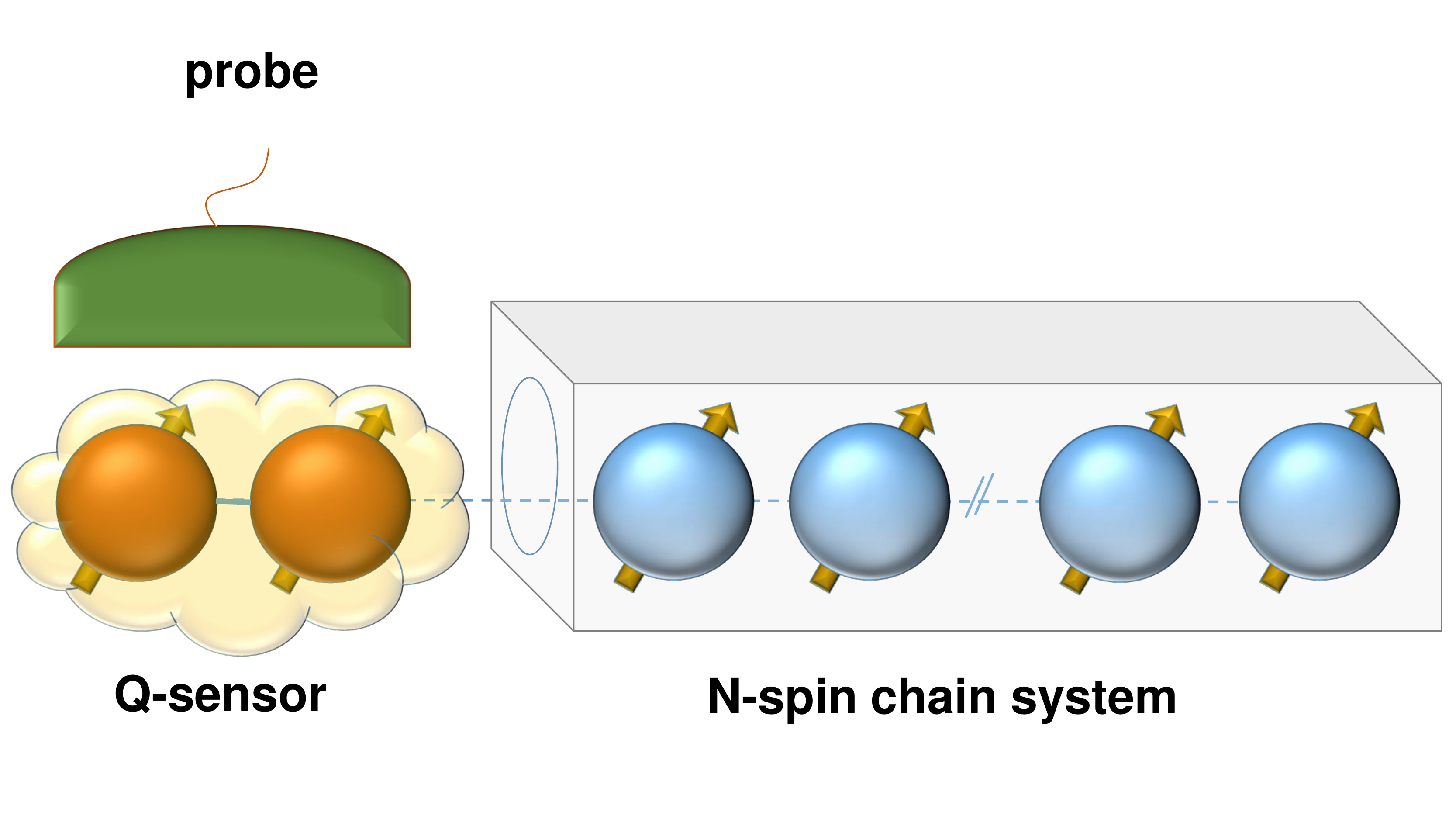}		
  	\caption{Schematic of a spin chain system connected a two-qubit sensor. The Q-sensor, consisting of two qubits, is measured by a probe. The object system consists of $N$ spins. }
  	\label{2-spins sensor}		
  \end{figure}

We consider a spin chain system consisting of $N$ qubits with the Hamiltonian
\begin{equation}\label{hamiltonian}
H=\sum_{k=1}^{N-1} h_k H_k,
\end{equation}
where the $H_k$ are known Hermitian operators and the $h_k$ are unknown coupling constants (strengths) to be estimated \cite{junzhang2014}. Here, we mainly concentrate on the magnitude of these unknown parameters and aim to estimate their magnitudes. In this paper, we further specify the system to be the exchange model without transverse field \cite{Di2008,Chris2005} such that
\begin{equation}\label{hamiltonianmodel}
H_{k}= \frac{h_k}{2}(X_k X_{k+1} + Y_{k}Y_{k+1}),
\end{equation}
where the subscript $k$ indicates the $k$-th spin. 
 To write the operators in a compact form, we omit the tensor product symbol and the identity operator unless otherwise specified.

For a single-qubit sensor in Fig. \ref{1-spin sensor}, we have
\begin{equation}\label{Hamiltoniansinglequbit}
H=\frac{h_\beta}{2}(X_\beta X_{1} + Y_{\beta}Y_{1})+\sum_{k=1}^{N-1} \frac{h_k}{2}(X_k X_{k+1} + Y_{k}Y_{k+1}),
\end{equation}
where the subscript $\beta$ indicates the sensor and $\frac{h_\beta}{2}(X_\beta X_{1} + Y_{\beta}Y_{1})$ is the interaction Hamiltonian indicating how the sensor and the chain system are coupled.

For a two-qubit sensor in Fig. \ref{2-spins sensor}, we have
\begin{equation}\label{twosensorHamiltonian}
\begin{split}
H=& \frac{h_\alpha}{2}(X_\alpha X_{\beta} + Y_{\alpha}Y_{\beta})+\frac{h_\beta}{2}(X_\beta X_{1} + Y_{\beta}Y_{1}) \\
& +\sum_{k=1}^{N-1} \frac{h_k}{2}(X_k X_{k+1} + Y_{k}Y_{k+1}),
\end{split}
\end{equation}
where $\frac{h_\alpha}{2}(X_\alpha X_{\beta} + Y_{\alpha}Y_{\beta})$ is the internal Hamiltonian of the two qubits of the Q-sensor while $\frac{h_\beta}{2}(X_\beta X_{1} + Y_{\beta}Y_{1})$ is the interaction Hamiltonian of the sensor and the chain system. The subscript $\alpha$ and $\beta$ indicates the first and second qubits of the sensor, respectively. $h_\alpha$ can be fabricated and we assume that it is known to us while $h_\beta$ could be either unknown or known. The sensor is employed to identify all of the unknown parameters in $\{h_i\}$.

\subsection{Problem formulation}\label{2.2problemformulation}
The capability of a sensor relates to the following aspects: the structures of the sensor and the object system, the measurement scheme and the initial settings. A good sensing scheme should ensure that all of the unknown parameters appear in the matrix $A$ with a proper structure and all of the parameters can be estimated using the measurement data. Here, we consider the capability of different sensing schemes for both the single-qubit sensor and the two-qubit sensor.

To determine the sensing capability, a dynamic model of the whole system including both the sensor and the object spin chain system should be obtained. A good dynamic model can benefit the process of determining capability. Here, we employ the state space model proposed in \cite{junzhang2014}.

Given the system Hamiltonian $H$, the time evolution of a system observable $O(t)$ in the Heisenberg picture is
\begin{equation}\label{Sfunction}
\frac{dO(t)}{dt}=\mi[H,O(t)],
\end{equation}
where $\mi$ is the imaginary unit (setting the Planck constant $\hbar=1$) \cite{QMGriff}.
Let $M$ be the measurement operator. The system operators that are coupled to $M$, together with $M$, form the accessible set $G=\{O_1 \   O_2 \  O_3 \  \cdots\ \}$. See \cite{qiyu2019accessibleset} for a detailed procedure for producing the accessible set. Define the system state $\bold{x}$ as the vector of all expectations of the operators in the accessible set $G$, then we have
\begin{equation}\label{state}
\bold{x}=(\hat{O}_1 \ \   \hat{O}_2 \  \cdots\  \hat{O}_i \ \cdots)^T.
\end{equation}
where $\hat{O}_i=\text{Tr}(O_i\rho)$ is the expectation value of observable $O_i\in G$. Intuitively, elements in $G$ are those influenced by the measurement operator during the time evolution. The linear equations describing the dynamics of the system with initial state $\bold{x}(0)=\bold{x}_0$ can be written as
\begin{equation}\label{statespaceeq}
		\left\{
\begin{array}{rl}
\dot{\bold{x}} &=A\bold{x}+Bu, \\
y &=C\bold{x},
\end{array}
\right.
\end{equation}
where $B=\bold{x}_0$ and $u=\delta(t)$.
Given the system state $\bold{x}$ defined in \eqref{state}, the derivative of $\bold{x}$ can be obtained by \eqref{Sfunction}, based on which the $A$ matrix can then be calculated. Setting the initial state of the system, which is equivalent to set the initial control signal, the matrix $B$ is then determined. $C$ is the decomposition coefficients of the measurement matrix onto the operators in the state $\bold{x}$. In \cite{junzhang2014}, the authors pointed out that a necessary condition for reconstructing the system is that all of the unknown parameters in $\{h_i\}$ appear in the matrix $A$.

For cases in which the system is of an arbitrary dimension, the capability of the sensor can be determined using the STA method given the state space model. Otherwise given a fixed system dimension, we can apply an approach involving transferring the state space model into its corresponding transfer function and then employing methods like the Gr$\ddot{\text{o}}$bner basis method. The relationship between a transfer function and state space functions in \eqref{statespaceeq} is
\begin{equation}\label{StateSpace2Transfer}
G(s)=\bold{C}(s\bold{I}-\bold{A})^{-1}\bold{x}_0,
\end{equation}
where $s\in \mathbb{C}$ is the Laplace variable. Note that, the relationship between the transfer function and the state space equations is not bijective, but determined only up to a similarity transformation. There is a unique transfer function for a given state space matrix while there are many different state space realizations for a known transfer function.


In practical situations, it can be difficult to initialize a qubit to an arbitrary state or to measure the system using an arbitrary operator (e.g., in a solid-state qubit system). Special constraints are applied to individual experiments. Here, we consider the case in which the initial state can only be prepared in eigenstates of the $X$ operator while the measurement is confined to be the $Y$ or $Z$ operator. In this paper, these constraints apply to both a single-qubit sensor and a two-qubit sensor.

For a single-qubit sensor, we have the following initial state
\begin{equation}\label{initialstatesiglequbit}
\rho_{ini}= \rho^x_\beta ,
\end{equation}
where $ \rho^x_\beta=\frac{I+\sigma_x}{2}$ is the eigenstate of $X_\beta$. The subscript $\beta$ labels the sensor and the superscript $x$ indicates the state is an eigenstate of the $X$ operator.

We consider two measurement schemes
\begin{equation}\label{MeasurementSetsinglequbit}
\mathbb{M}_1=\{ Y_\beta, \ Z_\beta\}.
\end{equation}
The corresponding accessible sets are listed below:
\begin{enumerate}
	\item [1)] $\tilde{M}^1=Y_\beta$, \\
	 $\tilde{G}^1=\{Y_\beta,\   Z_\beta X_1, \  Z_\beta Z_1Y_2, \  Z_\beta Z_1Z_2X_3,\ \cdots\  \}$,
	\item [2)] $\tilde{M}^2=Z_\beta$, \\
	 $\tilde{G}^2=\{Z_\beta,\  Y_\beta X_1,\  X_\beta Y_1, \ Y_\beta Z_1X_2,\  X_\beta Z_1Y_2,\  \cdots \ \}$.
\end{enumerate}
Note that we restrict the initial state to be prepared as eigenstates of the $X_\beta$ operator. However, $X_\beta$ belongs to neither the accessible set $\tilde{G}^1$ nor $\tilde{G}^2$, which means it is not coupled to any operators in the above accessible sets $\tilde{G}^1$ and $\tilde{G}^2$. In this situation, the expectations of measurements $Y$ and $Z$ on the sensing qubits are always zero, which means that for both cases we have $\bold{x_0}=0$ and $B=0$. From \eqref{statespaceeq} we always have $\bold{x}\equiv 0$ which results in the measurement $y\equiv 0$ for both measurement schemes $\tilde{M}^1$ and $\tilde{M}^2$. Hence, no information can be extracted in these situations. Therefore, the sensor loses the capability to identify the unknown parameters of the spin chain system. This observation is summarized as follows.
\begin{proposition}\label{SingleQubit}
For a quantum system with Hamiltonian given in \eqref{Hamiltoniansinglequbit}, a single-qubit sensor is incapable of identifying all of the unknown parameters in the system Hamiltonian when the initial state of the sensor is given in  \eqref{initialstatesiglequbit} and the measurement scheme is selected from \eqref{MeasurementSetsinglequbit}.
\end{proposition}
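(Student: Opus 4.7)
My plan is to exploit the structural observations already visible in the accessible sets $\tilde G^1$ and $\tilde G^2$ to show that the initial state vector $\bold{x}(0)$ vanishes; the linear dynamics in \eqref{statespaceeq} then give $\bold{x}(t)\equiv 0$ and hence $y(t)\equiv 0$, regardless of the unknown coupling constants. Because the measurement record is then identically zero, it cannot depend on $\{h_i\}$ at all and no estimator can distinguish the true parameters from any other.

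First I would write the joint initial density operator of the sensor-plus-chain system as $\rho_{\text{total}}=\rho_\beta^x\otimes I^{\otimes N}/2^N$, using the prescribed sensor state $\rho_\beta^x=\tfrac12(I+X_\beta)$ and the maximally mixed chain state declared in Section~\ref{2.1spinsystem}. Every operator in either accessible set is a tensor product of single-site Paulis, so its expectation against $\rho_{\text{total}}$ factorises site by site. The resulting product of single-site traces is nonzero only when the sensor factor lies in $\{I_\beta, X_\beta\}$ (the two Paulis with nonzero trace against $I+X_\beta$) and simultaneously every chain factor equals $I$; otherwise some single-site trace in the product vanishes. Thus the only accessible-set operators that could contribute to $\bold{x}(0)$ would be scalar multiples of $I_\beta\otimes I^{\otimes N}$ or $X_\beta\otimes I^{\otimes N}$.

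Second, I would verify by induction on the depth of nested commutators with $H$ used to build $\tilde G^j$ that neither of those operators ever arises. The natural invariant to maintain is that each element $P_\beta\otimes Q$ satisfies either (i) $P_\beta\in\{Y_\beta, Z_\beta\}$ with arbitrary $Q$, or (ii) $P_\beta\in\{I_\beta, X_\beta\}$ with a non-identity chain factor $Q$. The seeds $Y_\beta$ and $Z_\beta$ satisfy (i) vacuously. For the inductive step, a commutator with a chain term $\tfrac{h_k}{2}(X_kX_{k+1}+Y_kY_{k+1})$ leaves the sensor factor unchanged and, whenever nonzero, maps a non-identity chain string to another non-identity chain string via standard Pauli multiplication rules; and a commutator with the interaction term $\tfrac{h_\beta}{2}(X_\beta X_1+Y_\beta Y_1)$ may toggle the sensor factor between the two categories, but any transition into category (ii) simultaneously produces or retains a non-identity Pauli at site~$1$ of the chain. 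Making this case analysis exhaustive is routine but slightly delicate, and will be the main obstacle of the argument. Once the invariant is established, every element of $\tilde G^j$ has zero initial expectation, so $\bold{x}_0=0$ and $B=0$ in \eqref{statespaceeq}; therefore $\bold{x}(t)\equiv 0$ and $y(t)\equiv 0$ for both $\tilde M^1=Y_\beta$ and $\tilde M^2=Z_\beta$, and since the output is independent of the coupling constants, no estimator can recover them, proving Proposition~\ref{SingleQubit}.
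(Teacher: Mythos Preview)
Your proposal is correct and follows essentially the same approach as the paper: show that every operator in the accessible sets $\tilde G^1,\tilde G^2$ has vanishing expectation against $\rho_\beta^x\otimes I^{\otimes N}/2^N$, hence $\bold{x}_0=B=0$ and $y(t)\equiv 0$, so no information about $\{h_i\}$ is extracted. The paper simply asserts that $X_\beta$ does not appear in either accessible set and concludes $\bold{x}_0=0$ directly, whereas you supply the inductive invariant that makes this rigorous; the added detail is welcome but does not constitute a different route.
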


For the two-qubit sensor, the initial state can be chosen from the following set
\begin{equation}\label{initialstate}
\{ \rho^x_\alpha\otimes \frac{I}{2},\  \frac{I}{2}\otimes\rho^x_\beta,\  \rho^x_\alpha\otimes \rho^x_\beta  \}.
\end{equation}
The measurement operator is chosen from the following set
\begin{equation}\label{MeasurementSet}
\begin{split}
\mathbb{M}=&\{ Y_\alpha I_\beta,\ Z_\alpha I_\beta, Y_\alpha Z_\beta,\  Y_\alpha Y_\beta,\\
&Z_\alpha Y_\beta,\  Z_\alpha Z_\beta,\ I_\alpha Y_\beta,\ I_\alpha Z_\beta\},
\end{split}
\end{equation}
which indicates that we can only prepare the probe to measure the system using the $Y$ and $Z$ operators. For example, using the measurement operator $Y_\alpha Z_\beta$ means that we measure $Y$ on the first sensing qubit and $Z$ on the second sensing qubit. The cases $M=Y_\alpha I_\beta$ and $M=Z_\alpha I_\beta$ are similar to the cases $\tilde{M}^1$ and $\tilde{M}^2$ of the single-qubit sensor, which lead to the conclusion that these two measurement schemes are not capable of estimating the unknown parameters $\{h_i\}$.

The accessible sets for different measurement schemes are given as follows:
\begin{enumerate}
	\item [1)] $M^1=Z_\alpha Y_\beta$, \\
	 $G^1=\{X_\alpha ,\ Z_\alpha Y_\beta,\ Z_\alpha Z_\beta X_1,\ Z_\alpha Z_\beta Z_1Y_2,\ \cdots \ \}$;
	\item [2)] $M^2=Y_\alpha Z_\beta$, \\
	 $G^2=\{Y_\alpha Z_\beta ,\ X_\beta,\ Z_\beta Y_1,\ Y_\alpha X_\beta Y_1,\ X_\alpha Y_\beta Y_1,\ Y_\alpha Z_1,\ \cdots \ \}$;
	\item [3)] $M^3=Y_\alpha Y_\beta$, \\
	 $G^3=\{Y_\alpha Y_\beta, \ Y_\alpha Z_\beta X_1,\  X_\beta X_1, \ Y_\alpha Z_\beta Z_1 Y_2,\  X_\beta Z_1 Y_2, \ \cdots \ \}$;
	\item [4)] $M^4=Z_\alpha Z_\beta$, \\
	 $G^4=\{Z_\alpha Z_\beta ,\ X_\alpha Y_\beta Z_1,\ X_\alpha X_1,\ Z_\alpha Y_\beta X_1,\ Z_\alpha Z_1,\ \cdots \ \}$;
	\item [5)] $M^5=I_\alpha Y_\beta$, \\
	$G^5=\{ Y_\beta,\ X_\alpha Z_\beta,\ X_\alpha Y_\beta X_1,\ X_\alpha X_\beta Y_1,\ Z_\alpha X_1,\ \cdots\}$;
	\item [6)] $M^6=I_\alpha Z_\beta$, \\
	$G^6=\{ Z_\alpha,\ X_\alpha Y_\beta,\ Z_\beta,\ Y_\alpha X_\beta,\ Y_\beta X_1,\ \cdots\}$,
\end{enumerate}
where $M^i,\  i\in\{1,2,3,4, 5, 6\}$ represent six different measurement schemes and $G^i, i\in\{1,2,3,4, 5, 6\}$ are the corresponding accessible sets. When measuring $Y_\alpha Y_\beta$ and $Z_\alpha Z_\beta$, all of the initial states in $\rho_{ini}$ are orthogonal to operators in the accessible sets $G^3$, $G^4$, $G^5$ and $G^6$. Here, the quantum sensor fails to acquire any information of the unknown parameters. Therefore, we will only consider the remaining two cases: the probe is $Z_\alpha Y_\beta$ with the initial state $\rho^x_\alpha\otimes \frac{I}{2}$; the probe is $Y_\alpha Z_\beta$ with the initial state $\frac{I}{2}\otimes\rho^x_\beta$. For the generation of the accessible sets $G^1$ and $G^2$, please see Appendix \ref{APPX1}.

To study the capability of the two-qubit sensor, the following two problems will be addressed.
\begin{problem}\label{problem1}
	For a quantum spin chain system of the exchange model without transverse field with the Hamiltonian given in \eqref{twosensorHamiltonian}, when the measurement is $Z_\alpha Y_\beta$ and the initial state is $\rho^x_\alpha\otimes \frac{I}{2}$, is the information extracted from the sensor enough to identify the coupling strengths in the spin system?
\end{problem}
\begin{problem}\label{problem2}
For a quantum spin chain system of the exchange model without transverse field with the Hamiltonian given in \eqref{twosensorHamiltonian}, when the measurement is $Y_\alpha Z_\beta$ and the initial state is $\frac{I}{2}\otimes\rho^x_\beta$, is the information extracted from the sensor enough to identify the coupling strengths in the spin system?
\end{problem}
The above two problems are answered in Section \ref{section3} and Section \ref{section4}, respectively.

\section{Capability test when  measuring $Z_\alpha Y_\beta$}\label{section3}
\subsection{STA method}\label{STA}
The similarity transformation approach (STA) was employed in \cite{yuanlong2020} to determine the quantum system identifiability. We employ this method to determine the sensor capability. In our case, if there exist more than one realizations of the parameters $\{h_i\}$ that can generate the same input-output behavior, then we claim that the sensor is unable to identify the coupling parameters.

The criterion equations for the STA method were given in \cite{yuanlong2020} as follows
{\setstretch{0.3}
\begin{equation}\label{STAequations01}
SA(h) =A(h^\prime)S,
\end{equation}
\begin{equation}\label{STAequations02}
S\bold{x}_0=\bold{x}_0,
\end{equation}
\begin{equation}\label{STAequations03}
C=CS,
\end{equation}}
where $A,\ B,\ C$ are matrices of state space functions in \eqref{statespaceeq} and $S$ is a non-singular similarity transformation matrix. For a minimal system, if all solutions to the above equations satisfy $|h_i|=|h_i^\prime|$, then the sensor is able to identify these parameters. Otherwise, it is incapable since the uniqueness of the parameter set $\{h_i\}$ can not be guaranteed. 
For a non-minimal system, the structure preserving transformation (SPT) method was proposed in \cite{yuanlong2020} to obtain a minimal subsystem which can inherit some good structural features from the original state space model. The STA method can be applied with the assistance of SPT method for a class of non-minimal systems \cite{yuanlong2020}.

\subsection{Proof of capability}
For \textbf{Problem 1}, the accessible set is
\begin{equation}\label{eq8}
G^1=\{X_\alpha,\ Z_\alpha Y_\beta,\ Z_\alpha Z_\beta X_1, \ Z_\alpha,\  Z_\beta Z_1 Y_2, \ Z_\alpha Z_\beta Z_1 Z_2 X_3,\ \cdots \ \}.
\end{equation}
Different orders of the operators can yield different $A$, $B$ and $C$ matrices which represent the same system. A good ordering of element operators can result in state space equations with good structure, which can benefit the determination process for sensing capability. Here, we choose it to be in the order as shown in Appendix \ref{ASZ1Y2}, which is the same as \eqref{eq8}. The initial state is chosen as $\rho^x_\alpha\otimes \frac{I}{2}$ and we thus have
\begin{equation}\label{BmatrixZ1Y2}
B=(1 \ 0\  0\  0 \  0 \   \cdots \ 0 \ )^T.
\end{equation}
The corresponding matrix $A$  is
\begin{equation}\label{AmatrixZ1Y2}
A =  \left(
\begin{array}{cccccc}
0 & h_\alpha & 0 & 0 & 0 & \cdots\\
-h_\alpha & 0 & h_\beta & 0 & 0 & \cdots\\
0 & -h_\beta & 0 & h_1 & 0 & \cdots\\
0 & 0 & -h_1 & 0 & \ddots & \cdots \\
0 & 0& 0&  \ddots & \quad &h_{N-1} \\
 \vdots & \vdots & \vdots & \quad &-h_{N-1} & 0\\
\end{array}
\right). \\
\end{equation}
$A$ is of dimension $(N+2)\times(N+2)$.
We have the following conclusion whose proof is presented in Appendix \ref{APPlem1}.
\begin{lemma}\label{lem1}
With (\ref{AmatrixZ1Y2}) and $B=(1,\ 0,\ \cdots,\ 0)^T$, the controllability matrix $\text{CM}=(B,\ AB,\ \cdots,\ A^{(N+1)}B)$ has full rank for almost any value of $\{h_i\}$.
\end{lemma}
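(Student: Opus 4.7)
The plan is to exploit the tridiagonal skew\mbox{-}symmetric structure of $A$ together with the extreme sparsity of $B=e_1$, and show that $\det(\mathrm{CM})$ is a nonzero polynomial in $(h_\alpha,h_\beta,h_1,\ldots,h_{N-1})$; since the zero set of a nonzero polynomial has Lebesgue measure zero on $\mathbb{R}^{N+1}$, this immediately yields the ``full rank for almost any value of $\{h_i\}$'' statement.

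The core technical step is to track the support of the Krylov vectors $A^{k}B$. I would prove by induction on $k$ that $A^{k}e_1$ has nonzero entries only at positions of the same parity as $k+1$, that its support is contained in $\{1,2,\ldots,k+1\}$, and that its entry in the extreme position $k+1$ is exactly $\pm h_\alpha h_\beta h_1 h_2 \cdots h_{k-1}$ (with the convention that the empty product is $1$, so that the last entry of $B$, $AB$, $A^{2}B$ is $1$, $-h_\alpha$, $h_\alpha h_\beta$ respectively). The induction step is a one-line computation: the tridiagonal action of $A$ maps the extreme-position entry in $A^{k}e_1$ to the next position via multiplication by the adjacent super\mbox{-}/sub\mbox{-}diagonal element, and cannot create new nonzero entries further out.

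With this pattern in hand, I would permute the columns of $\mathrm{CM}$ so that the even-index columns $B,A^{2}B,A^{4}B,\ldots$ come first and the odd-index columns $AB,A^{3}B,\ldots$ come afterwards, and correspondingly permute the rows into odd-indexed rows $(1,3,5,\ldots)$ followed by even-indexed rows $(2,4,6,\ldots)$. By the parity observation, the permuted matrix is block diagonal with two square blocks, each of which is lower triangular (after ordering by support size) with diagonal entries
\begin{equation}
1,\ h_\alpha h_\beta,\ h_\alpha h_\beta h_1 h_2,\ \ldots
\end{equation}
in the first block, and
\begin{equation}
-h_\alpha,\ -h_\alpha h_\beta h_1,\ -h_\alpha h_\beta h_1 h_2 h_3,\ \ldots
\end{equation}
in the second block. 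Multiplying these pivots yields
\begin{equation}
\det(\mathrm{CM}) = \pm\, h_\alpha^{a}\, h_\beta^{b}\, h_1^{c_1}\cdots h_{N-1}^{c_{N-1}},
\end{equation}
a single nonzero monomial whose exponents can be read off from the counts above. Since this polynomial is not identically zero, $\det(\mathrm{CM})\neq 0$ on the complement of the hypersurface $\{h_\alpha h_\beta h_1\cdots h_{N-1}=0\}$, which has measure zero, establishing the lemma.

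The only genuine obstacle is bookkeeping: one has to handle the parities of $N+2$ (whether the last row/column index is odd or even) so that the two triangular blocks are indeed square and every pivot is accounted for, and one has to verify that the sign and magnitude given by the induction match the diagonal entries used in the determinant computation. Everything else is linear algebra together with the standard fact that a nonzero polynomial vanishes only on a set of measure zero.
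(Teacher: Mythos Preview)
Your proposal is correct and follows essentially the same strategy as the paper: track by induction the support of $A^{k}B$, observe that its extreme nonzero entry sits in position $k+1$ and equals $(-1)^{k}h_\alpha h_\beta\prod_{i=1}^{k-2}h_i$, and conclude that $\det(\mathrm{CM})$ is a nonzero monomial, hence nonvanishing for almost all parameter values. Your parity-based block decomposition is an unnecessary detour, however---the support observation already makes $\mathrm{CM}$ upper triangular as written, so the paper simply reads the determinant off the diagonal without any row or column permutation.
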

Considering measuring $Z_\alpha Y_\beta$, we have
\begin{equation}\label{CmatrixZ1Y2}
C=(0\ 1\ 0\ \cdots\ 0).
\end{equation}
We present the following theorem which provides an answer to Problem \ref{problem1}.
\begin{theorem}\label{th4}
	For a system with state space equations given in \eqref{BmatrixZ1Y2}, \eqref{AmatrixZ1Y2} and \eqref{CmatrixZ1Y2}, where $h_\alpha$ and $h_\beta$ can be designed, the unknown parameters $\{h_i\}$ can be estimated from the measurement results of $Z_\alpha Y_\beta$.
\end{theorem}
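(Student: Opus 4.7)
The plan is to apply the similarity transformation approach (STA) reviewed in Section~\ref{STA}. Lemma~\ref{lem1} already guarantees controllability of $(A,B)$ for generic $\{h_i\}$, and an analogous argument using the fact that $A^T=-A$ yields observability of $(C,A)$: the pair $(C,A)$ is observable iff $(-A,C^T)=(-A,e_2)$ is controllable, which is handled by the same tridiagonal determinant computation as in Lemma~\ref{lem1} with the starting vector shifted to $e_2$. With the realisation minimal, any alternative parameter set $\{h_i'\}$ reproducing the same input-output data must correspond to a nonsingular $S$ satisfying $SA(h)=A(h')S$, $SB=B$, $CS=C$. The goal is to show that the only such $S$ are diagonal with $\pm 1$ entries, forcing $|h_i'|=|h_i|$.

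I would first exploit the boundary constraints. Since $B=e_1$, the equation $SB=B$ pins the first column of $S$ to $e_1$. Since $C=e_2^T$, the equation $CS=C$ pins the second row of $S$ to $e_2^T$. Then I would read off $SA=A'S$ entry by entry. Because $A$ and $A'$ are tridiagonal and skew-symmetric, each scalar equation $(SA)_{ij}=(A'S)_{ij}$ couples only a few entries of $S$. Inspecting the first row of $SA=A'S$ at columns $k=1,2,\ldots$ and using the already-fixed entries, one obtains $s_{1k}=0$ for all $k\geq 2$ (nonvanishing of the couplings is what allows this elimination). A symmetric sweep down the first column of $SA=A'S$ forces $s_{i2}=0$ for all $i\neq 2$ and simultaneously recovers $h_\alpha'=h_\alpha$. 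Iterating this procedure on successive rows and columns propagates the same pattern downstream: at stage $k$, the first $k$ rows and $k$ columns of $S$ have been determined to be the standard basis up to a sign, and the equations at stage $k+1$ involve only $h_{k-1}, h_k$ and one new unknown entry $s_{k+1,k+1}$, which is pinned to $\pm 1$. Substituting back, $SA(h)=A(h')S$ collapses to $A=DA'D^{-1}$ with $D=\mathrm{diag}(\pm 1,\ldots,\pm 1)$, which gives $|h_i'|=|h_i|$ for every $i$.

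The main obstacle will be organising the induction so that at each step exactly one new entry of $S$ is introduced; because $A$ has three nonzero bands per row, one must interleave row-sweeps and column-sweeps in the correct order to avoid underdetermined linear systems. A secondary technicality is the ``almost any value'' clause of Lemma~\ref{lem1}: every elimination step divides by a coupling constant, so the conclusion holds on a Zariski-open subset of parameter space. Since $h_\alpha$ and $h_\beta$ are design parameters, they may be chosen away from the exceptional locus, which is the content of the theorem's hypothesis that $h_\alpha$ and $h_\beta$ ``can be designed''. The residual sign ambiguity $h_i\mapsto -h_i$ is immaterial for the paper's stated objective, which (as noted in Section~\ref{2.1spinsystem}) is to recover the magnitudes of the coupling constants.
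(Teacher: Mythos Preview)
Your proposal contains a genuine gap at the observability step. You claim that observability of $(C,A)$ with $C=e_2^T$ follows by ``the same tridiagonal determinant computation as in Lemma~\ref{lem1} with the starting vector shifted to $e_2$''. This is false when $N$ is odd: the paper's Lemma~\ref{lem3} shows that in that case $(A,C)$ is unobservable for \emph{every} choice of parameters, not merely on a Zariski-closed set. The obstruction is structural --- after the obvious row/column eliminations in the PBH test at $\lambda=0$, one is left with an odd-dimensional antisymmetric block, which is always singular. So the realisation $(A,B,C)$ is never minimal for odd $N$, and the STA equations \eqref{STAequations01}--\eqref{STAequations03} cannot be applied directly.

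Consequently the paper splits the proof by parity. For $N$ even it proceeds essentially as you propose (Lemma~\ref{lem2} for observability, then the inductive column/row sweep on $SA=A'S$). For $N$ odd it first passes, via the SPT method, to an $(N+1)$-dimensional minimal subsystem $(\tilde A,\tilde B,\tilde C)$ whose first $N$ columns coincide with those of $A$ but whose last column is given by the rational expressions in \eqref{eq43} (this is the content of Lemma~\ref{lem4}). The STA analysis on this reduced system is then genuinely harder: your clean induction recovers $|h_i|=|h_i'|$ only for $i\leq N-4$ (Lemma~\ref{lem5}), after which the remaining three parameters $h_{N-3},h_{N-2},h_{N-1}$ must be extracted from nonlinear relations such as \eqref{eq64}, \eqref{eq68}, \eqref{eq70} coming from the modified last column. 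Your plan omits this entire half of the argument.

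A minor secondary point: even in the even case, the shift from $e_1$ to $e_2$ spoils the upper-triangularity that makes Lemma~\ref{lem1} immediate, because $A e_2$ already has nonzero entries in positions $1$ and $3$. The paper handles this with an odd/even permutation similarity (proof of Lemma~\ref{lem2}) rather than a direct Krylov computation.
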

\begin{proof}
According to the parity of the dimension, we prove the theorem separately. First we consider that $N$ is even. \\


	1) $N$ is even \\

	When $N$ is even, we first prove that the system described by \eqref{BmatrixZ1Y2}, \eqref{AmatrixZ1Y2} and \eqref{CmatrixZ1Y2} is minimal for almost any value of the unknown parameters. Then we prove that the sensor is capable of identifying the coupling parameters when $N$ is even.
	
We first prove the \textbf{minimality}.
	
	\begin{lemma}\label{lem2}
		Let $N$ be even. With \eqref{AmatrixZ1Y2} and \eqref{CmatrixZ1Y2}, the observability matrix
		\begin{equation}
		 \text{OM}=\left(
		\begin{array}{c}
		C\\
		CA\\
		\vdots\\
		CA^{N+1}\\
		\end{array}\nonumber\right)
		\end{equation}
		has full rank for almost any value of $\{h_i\}$.
	\end{lemma}
	The proof of Lemma \ref{lem2} is presented in Appendix \ref{APPlem2}.
	According to Lemmas \ref{lem1} and \ref{lem2}, the system is minimal for almost any value of the unknown parameters. Now we perform the \textbf{capability test.}
	Using (\ref{STAequations02}) and (\ref{STAequations03}) we know that $S$ is of the form
	\begin{equation}
	S=\left(
	\begin{array}{ccccc}
	1 & * & * &\cdots & *\\
	0 & 1 & 0 &\cdots & 0\\
	0 & * & * &\cdots & *\\
	\vdots & \vdots & \vdots & & \vdots\\
	0 & * & * &\cdots & *\\
	\end{array}
	\right)_{(N+2)\times(N+2)},
	\end{equation}	
	and (\ref{STAequations01}) is now
	\begin{equation}\label{eq20}
	\begin{array}{rl}
	&\left(
	\begin{array}{cccc}
	1 & * & \cdots & *\\
	0 & 1 & \cdots & 0\\
	\vdots & \vdots &  & \vdots\\
	0 & * & \cdots & *\\
	\end{array}
	\right)
	\left(
	\begin{array}{ccccc}
	0 & h_\alpha & 0 & \cdots & 0\\
	-h_\alpha & 0 & \ddots & &\\
	0 & \ddots &  & &\\
	\vdots &  &  & &\\
	\end{array}
	\right)\\
	=&\left(
	\begin{array}{ccccc}
	0 & h'_\alpha & 0 & \cdots & 0\\
	-h'_\alpha & 0 & \ddots & &\\
	0 & \ddots &  & &\\
	\vdots &  &  & &\\
	\end{array}
	\right)
	\left(
	\begin{array}{cccc}
	1 & * & \cdots & *\\
	0 & 1 & \cdots & 0\\
	\vdots & \vdots &  & \vdots\\
	0 & * & \cdots & *\\
	\end{array}
	\right).\\
	\end{array}
	\end{equation}
	In this paper, the $*$ represents an indeterminate element. This case can be proved by following a similar approach to that the proof of Theorem $1$ in \cite{yuanlong2020}.
	
	Denote the product matrix on the LHS and RHS of (\ref{eq20}) as $L$ and $R$, respectively. From $L_{21}=R_{21}$, we have $\theta_1=\theta_1'$. From $L_{1\sigma}=R_{1\sigma}$, we have $S_{1\sigma}=(1,0,...,0)$. For an arbitrary matrix $A$, we denote $A_{\sigma i}$ and $A_{j \sigma}$ as its $i$-th column and $j$-th row, respectively. Then $S$ is of the form
	\begin{equation}
	S=\left(
	\begin{array}{ccccc}
	1 & 0 & 0 &\cdots & 0\\
	0 & 1 & 0 &\cdots & 0\\
	0 & * & * &\cdots & *\\
	\vdots & \vdots & \vdots & & \vdots\\
	0 & * & * &\cdots & *\\
	\end{array}
	\right)_{(N+2)\times(N+2)},
	\end{equation}	
	
	If we denote the partitioned $S$ and $A$ as $$S=\left(\begin{matrix}1_{1\times 1}&\mathbf{0}_{1\times (N+1)}^T\\ \mathbf{0}_{(N+1)\times 1}&\tilde{S}_{(N+1)\times (N+1)}\end{matrix}\right),$$$$ A=\left(\begin{matrix}0_{1\times 1}& \mathbf{E}_{1\times (N+1)}^T \\ -\mathbf{E}_{(N+1)\times 1} &\tilde{A}_{(N+1)\times (N+1)}\end{matrix}\right),$$
	then (\ref{eq20}) is equivalent to
	\begin{equation}\label{eq21}
	\mathbf{E}^T=\mathbf{E}'^T\tilde S,
	\end{equation}
	\begin{equation}\label{eq22}
	-\tilde S \mathbf{E}=-\mathbf{E}',
	\end{equation}
	\begin{equation}\label{eq23}
	\tilde S\tilde A=\tilde A'\tilde S.
	\end{equation}
	From the first elements in (\ref{eq21}) and (\ref{eq22}), we have $h_\alpha=h'_\alpha\tilde{S}_{11}$ and $-\tilde{S}_{11}h_\alpha=-h'_\alpha$. Since the atypical case \cite{yuanlong2018TAC}  of $h_\alpha=0$ can be avoided by experimental design and the case is then omitted, we have $h'_\alpha\neq0$ and $|\tilde{S}_{11}|=1$, which indicates $|h_\alpha|=|h'_\alpha|$. From the remaining elements in (\ref{eq21}) and (\ref{eq22}), we have $\tilde{S}_{12}=\tilde{S}_{13}=\cdots=\tilde{S}_{1n}=0$ and $\tilde{S}_{21}=\tilde{S}_{31}=\cdots=\tilde{S}_{n1}=0$.
	
	If $\tilde{S}_{11}=1$, (\ref{eq23}) now is of the same form as (\ref{eq20}) but with the dimension decreased by $1$; otherwise if $\tilde{S}_{11}=-1$, (\ref{eq23}) is equivalent to $(-\tilde S)\tilde A=\tilde A'(-\tilde S)$, which is also of the same form as (\ref{eq20}) with the dimension decreased by $1$. Therefore these procedures can be performed inductively and finally we know all of the solutions to (\ref{eq20}) satisfy $S=\text{diag}(1,\ \pm1,\ \cdots,\ \pm1)$ and $|h_\alpha|=|h'_\alpha|$, $|h_\beta|=|h'_\beta|$ and $|h_i|=|h_i'|$ for all $1\leq i\leq (N-1)$. When $N$ is even, the sensor is capable of identifying the coupling parameters.\\
	
		2) $N$ is odd \label{sst1} \\
		
		Now we consider the case of odd $N$. We first prove that the system is always non-minimal. Then the SPT method can be employed to obtain its minimal realization while trying to maintain most of its structure. Then we prove that all of the unknown parameters can be estimated.
		
		When $N$ is odd, although the system is still controllable using Lemma \ref{lem1}, we have the following lemma (the proof is presented in Appendix \ref{APPlem3}).
		\begin{lemma}\label{lem3}
			Let $N$ be odd. With (\ref{AmatrixZ1Y2}) and $C=(0,\ 1,\ 0,\ \cdots,\ 0)$, the system is always unobservable.
		\end{lemma}
		
		We employ the SPT method to prove the conclusion \cite{yuanlong2020}. The first step is to perform \textbf{minimal decomposition} and in this case considering the observability is enough. We construct
		\begin{equation}\label{eqad1}
		P=\left(
		\begin{array}{cccc}
		&C& &\\
		&CA& &\\
		&\vdots& &\\
		&CA^{N}& &\\
		0&\cdots&0&1\\
		\end{array}\right)
		\end{equation}
		and partition it as
		\begin{equation}\label{eqad2}
		P=\left(\begin{matrix} \bar{P}_{(N+1)\times (N+1)} & \mathbf{p}_{(N+1)\times1}\\ \mathbf{0}^T_{1\times (N+1)} & 1  \end{matrix}\right).
		\end{equation}
		
		$\bar{P}$ is the observability matrix whose dimension equals to the system dimension decreased by one. Using Lemma \ref{lem2}, we know $\bar{P}$ is non-singular, and therefore $P$ is non-singular. Since the first $(N+1)$ rows of $P$ are linearly independent, we know $P$ is the similarity transformation matrix which can decompose the system into observable and unobservable parts. Furthermore, its unobservable subspace is only one-dimensional.
		
		Then we choose the second transformation as \\
		$\bar{P}^{-1}_{(N+1)\times (N+1)}\oplus I_{1\times1}$. The total transformation matrix is $Q=\left(\begin{matrix} \bar{P}^{-1} & \mathbf{0}\\ \mathbf{0}^T & 1  \end{matrix}\right)\left(\begin{matrix} \bar{P} & \mathbf{p}\\ \mathbf{0}^T & 1  \end{matrix}\right)=\left(\begin{matrix} I& \bar{P}^{-1}\mathbf{p} \\ \mathbf{0}^T & 1  \end{matrix}\right)$, and its inversion is $Q^{-1}=\left(\begin{matrix} I& -\bar{P}^{-1}\mathbf{p} \\ \mathbf{0}^T & 1  \end{matrix}\right)$. Let $\mathbf{\bar{x}}=Q\mathbf{{x}}$ generate the system $\bar{\Sigma}=(\bar{A},\bar{B},\bar{C})$:
		\begin{equation}\label{eqad38}
		\left\{
		\begin{array}{rl}
		\mathbf{\dot{\bar{x}}}&=\bar{A}\mathbf{\bar{x}}+\bar{B}{\delta(t)},\ \ \mathbf{\bar{x}}(0)=\mathbf{0},\\
		y&=\bar{C}\mathbf{\bar{x}}.\\
		\end{array}
		\right.
		\end{equation}
		We partition $A$ as
		\begin{equation}\label{eq39}
		A=\left(\begin{matrix}
		UL_{(N+1)\times (N+1)} &\ UR_{(N+1)\times 1}\\
		DL_{1\times (N+1)} &\ DR_{1\times 1}\\
		\end{matrix}\right).
		\end{equation}
		Then we have
		\begin{equation}
		\begin{array}{rl}
		\bar{A}&=QAQ^{-1}=\left(\begin{matrix} I& \bar{P}^{-1}\mathbf{p} \\ \mathbf{0}^T & 1  \end{matrix}\right)\left(\begin{matrix}
		UL &\ UR\\
		DL &\ DR\\
		\end{matrix}\right)\left(\begin{matrix} I& -\bar{P}^{-1}\mathbf{p} \\ \mathbf{0}^T & 1  \end{matrix}\right)\\
		&=\left(\begin{matrix}
		UL+\bar{P}^{-1}\mathbf{p}DL &\ *_{(N+1)\times1}\\
		*_{1\times (N+1)} &\ *_{1\times1}\\
		\end{matrix}\right),\\
		\end{array}
		\end{equation}
		and
		\begin{equation}
		\begin{split}
		\bar{B}&=QB=Q_{\sigma1}=(1,\ 0,\ \cdots,\ 0)^T, \\
		\bar{C}&=CQ^{-1}=(Q^{-1})_{2\sigma}=(0,\ 1,\ 0,\ \cdots,\ 0,*).
		\end{split}
		\end{equation}
		Partition $\mathbf{\bar{x}}=(\mathbf{\tilde{x}}^T,\ \bar{x}_{N+2})^T$. From the construction of $Q$ we know $\bar{\Sigma}$ is in the observable canonical form. Therefore, $\mathbf{\tilde{x}}$ corresponds to the minimal subspace (denoted as $\tilde{\Sigma}$) of $\bar{\Sigma}$. We denote the evolution of $\mathbf{\tilde{x}}$ as:
		\begin{equation}\label{eq291}
		\left\{
		\begin{array}{rl}
		\mathbf{\dot{\tilde{x}}}&=\tilde{A}\mathbf{\tilde{x}}+\tilde{B}{\delta(t)},\ \ \mathbf{\tilde{x}}(0)=\mathbf{0},\\
		y&=\tilde{C}\mathbf{\tilde{x}},\\
		\end{array}
		\right.
		\end{equation}
		and then we have
		\begin{equation}
		\begin{split}
		\tilde{A}&=UL+\bar{P}^{-1}\mathbf{p}DL, \\
		\tilde{B}&=(1,\ 0,\ \cdots\ ,\ 0)^T, \\
		\tilde{C}&=(0,\ 1,\ 0,\ \cdots,\ 0).
		\end{split}
		\end{equation}
		Hence, $\tilde{B}$ and $\tilde{C}$ are obtained by removing the last element of $B$ and $C$, respectively.
		
		To further determine $\tilde{A}$, we use the following lemma whose proof is presented in Appendix \ref{APPlem4}.
		\begin{lemma}\label{lem4}
			Let $N$ be odd. Given \eqref{AmatrixZ1Y2} and \eqref{CmatrixZ1Y2}. With (\ref{eqad1}) and (\ref{eqad2}), it holds that
			$\mathbf{p}=(0,\ \cdots,\ 0,\ h_\beta \prod_{i=1}^{N-1} h_i)^T$ and
			\begin{equation}
			\bar{P}^{-1}=\left(
			\begin{array}{ccc}
			\cdots & * & -K \\
			\cdots & * & 0 \\
			\cdots & * & -Kh_\alpha/h_\beta \\
			\cdots & * & 0 \\
			\vdots & \vdots & \vdots \\
			\cdots & * & \frac{-Kh_\alpha\prod_{i=1}^{(N-1)/2-1}h_{2i-1}}{(h_\beta\prod_{i=1}^{(N-1)/2-1}h_{2i}) } \\
			\cdots & * & 0 \\
			\end{array}
			\right)/\det(\bar{P}),
			\end{equation}
			where $\det(\bar{P})=
			h_\alpha h_{\beta}^{N-1}h_{N-2}^3\prod_{i=1}^{(N-3)/2}h_{2i-1}^{N+2-2i}h_{2i}^{(N-1)-2i}$ and $$K=\left\{\begin{matrix}
			1,& N=1\\
			h_{\beta}^{N-1}\prod_{i=1}^{N-2}h_i^{(N-1)-i},& N\geq 3.\\
			\end{matrix}\right.$$
		\end{lemma}
		
		Using Lemma \ref{lem4}, we have
		\begin{equation}\label{eq42}
		\begin{array}{rl}
		&\tilde{A}=UL+\bar{P}^{-1}\mathbf{p}DL\\
		&=UL+\bar{P}^{-1}_{(N+1)\times (N+1)}\cdot \\
		&\ \ \ [(0,\cdots,0,h_\beta \prod_{i=1}^{N-1} h_i)^T]_{(N+1)\times1}(0,\cdots,0,-h_{N-1})_{1\times (N+1)}\\
		&=UL+(h_\beta \prod_{i=1}^{N-1} h_i)(\bar{P}^{-1})_{\sigma (N+1)}(0,\cdots,0,-h_{N-1})_{1\times (N+1)}\\
		&=UL-h_{N-1}(h_\beta \prod_{i=1}^{N-1} h_i)[0_{(N+1)\times N},\ \ (\bar{P}^{-1})_{\sigma (N+1)}],\\
		\end{array}
		\end{equation}
		and further
		\begin{equation}\label{eq43}
		\tilde{A}_{k(N+1)}=\left\{
		\begin{matrix}
		0,& k\ is \  even\\
		\frac{h_\beta h_{N-1}\prod_{i=1}^{(N-1)/2}h_{2i}}{h_\alpha \prod_{i=1}^{(N-1)/2}h_{2i-1}},   & k=1\\
		\frac{h_{N-1}\prod_{i=(k-1)/2}^{(N-1)/2}h_{2i}}{\prod_{i=(k-1)/2}^{(N-1)/2}h_{2i-1}} ,& k \  is \  odd \  and \ 1< k< N+1\\
		h_{N-2}+\frac{h_{N-1}^2}{h_{N-2}} ,& k=N+1.\\
		\end{matrix}
		\right.
		\end{equation}
		
		We compare \eqref{eq42} with \eqref{eq39} and find the following property. If we remove the downmost $(N+2)$th row and the rightmost $(N+2)$th column of $A$ and change its $(N+1)$th column to be \eqref{eq43}, we obtain $\tilde{A}$, and the first $N$ columns of $\tilde{A}$ are the same as these of $A$. Hence, we maintain most of the structure properties in $A$, which makes it possible to analyse the capability of the sensor using STA equations.
		
		Having proven that the system for odd $N$ is not minimal and obtained a minimal subsystem \eqref{eq291}, we analyse the sensing capability using the STA method \cite{yuanlong2020}. Using (\ref{STAequations02}) and (\ref{STAequations03}) we know $S$ is of the form
		\begin{equation}
		S=\left(
		\begin{array}{ccccc}
		1 & * & * &\cdots & *\\
		0 & 1 & 0 &\cdots & 0\\
		0 & * & * &\cdots & *\\
		\vdots & \vdots & \vdots & & \vdots\\
		0 & * & * &\cdots & *\\
		\end{array}
		\right)_{(N+1)\times (N+1)},
		\end{equation}
		and (\ref{STAequations01}) now becomes
		{
			\begin{equation}\label{eq44}
			\begin{array}{rl}
			&\left(
			\begin{array}{ccccc}
			1 & * & * &\cdots & *\\
			0 & 1 & 0 &\cdots & 0\\
			0 & * & * &\cdots & *\\
			\vdots & \vdots & \vdots & & \vdots\\
			0 & * & * &\cdots & *\\
			\end{array}
			\right)
			\addtolength{\arraycolsep}{-0.11cm}
			\left(
			\begin{array}{cccccc}
			0 & h_\alpha & 0 & \cdots & 0 &\tilde{A}_{1(N+1)}\\
			-h_\alpha & 0 & h_\beta & & 0&0\\
			0 & -h_\beta &  & \ddots & & \\
			\vdots &  & \ddots & &h_{N-3}& 0\\
			0 & \cdots & 0 & -h_{N-3}&0 & \tilde{A}_{N(N+1)}\\
			0 & \cdots & 0 &0 &-h_{N-2} & 0\\
			\end{array}
			\right)\\
			=&\left(
			\addtolength{\arraycolsep}{-0.11cm}
			\begin{array}{cccccc}
			0 & h'_\alpha & 0 & \cdots & 0 &\tilde{A}_{1(N+1)}'\\
			-h'_\alpha & 0 & h'_\beta & & 0&0\\
			0 & -h'_\beta &  & \ddots & & \\
			\vdots &  & \ddots & &h_{N-3}'& 0\\
			0 & \cdots & 0 & -h_{N-3}'&0 & \tilde{A}_{N(N+1)}'\\
			0 & \cdots & 0 &0 &-h_{N-2}' & 0\\
			\end{array}
			\right)
			\left(
			\begin{array}{ccccc}
			1 & * & * &\cdots & *\\
			0 & 1 & 0 &\cdots & 0\\
			0 & * & * &\cdots & *\\
			\vdots & \vdots & \vdots & & \vdots\\
			0 & * & * &\cdots & *\\
			\end{array}
			\right).\\
			\end{array}
			\end{equation}
		}
		
		We have the following lemma characterizing its solution:
		\begin{lemma}\label{lem5}
			For a fixed $k$ satisfying $3\leq k\leq N-1$, the solution to (\ref{eq44}) satisfies that the first $k$ columns of $S$ is
			\begin{equation}
			\left(
			\begin{array}{ccccc}
			1 & 0 & 0 & 0& \cdots \\
			0 & 1 & 0 & 0& \cdots \\
			0 & 0 & \pm1 & 0 &\cdots\\
			\vdots & \vdots & & \ddots &\\
			0 & 0 & \cdots & 0&\pm1 \\
			0 & 0 &  \cdots &0 & 0\\
			\vdots & \vdots & &\vdots  &\vdots \\
			0 & 0 &  \cdots &0 & 0\\
			\end{array}
			\right)_{(N+1)\times k},
			\end{equation}
			and we have $h_\alpha=h'_\alpha$ and $ |h_\beta|=|h'_\beta|$ and $|h_{i-2}|=|h_{i-2}'|$ for $3\leq i\leq k$.
		\end{lemma}
		
		\begin{proof}	
			Denote the product matrix on the LHS and RHS of (\ref{eq44}) as $L$ and $R$, respectively. From $L_{\sigma1}=R_{\sigma1}$, we have $-h_\alpha S_{\sigma2}=\tilde{A}_{\sigma1}'$, which implies $h_\alpha =h'_\alpha$ and $S_{\sigma2}=(0,\ 1,\ 0,\ \cdots,\ 0)^T$. From $L_{\sigma2}=R_{\sigma2}$, we have $h_\alpha S_{\sigma1}-h_\beta S_{\sigma3}=\tilde{A}_{\sigma2}'$, which implies $S_{33}h_\beta=h'_\beta$ and $S_{\sigma3}=(0,\ 0,\ S_{33},\ 0,\ \cdots,\ 0)^T$. Then from $L_{23}=R_{23}$, we have $h_\beta=h'_\beta S_{33}$. Hence, $S_{33}=\pm1$ and $|h_\beta|=|h_\beta'|$.
			
			The remaining part of Lemma \ref{lem5} can be proved via induction on $k$. Suppose it holds for $k=t\leq N-2$. Then from $L_{\sigma t}=R_{\sigma t}$, $h_{t-3}S_{\sigma (t-1)}-h_{t-2}S_{\sigma(t+1)}=S_{tt}\tilde{A}_{\sigma t}'$, which is
			\begin{equation}\label{eq47}
			\left(
			\begin{array}{c}
			\vdots\\
			-h_{t-2}S_{(t-2)(t+1)}\\
			h_{t-3}S_{(t-1)(t-1)}-h_{t-2}S_{(t-1)(t+1)}\\
			-h_{t-2}S_{t(t+1)}\\
			-h_{t-2}S_{(t+1)(t+1)}\\
			-h_{t-2}S_{(t+2)(t+1)}\\
			\vdots\\
			\end{array}
			\right)=S_{tt}\left(
			\begin{array}{c}
			\vdots\\
			0\\
			h_{t-3}'\\
			0\\
			-h_{t-2}'\\
			0\\
			\vdots\\
			\end{array}
			\right).
			\end{equation}
			From the $(t-1)$th and $(t+1)$th rows of (\ref{eq47}), we have
			\begin{equation}\label{eq48}
			h_{t-3}S_{(t-1)(t-1)}-h_{t-2}S_{(t-1)(t+1)}=S_{tt}h_{t-3}'
			\end{equation}
			and
			\begin{equation}\label{eq49}
			h_{t-2}S_{(t+1)(t+1)}=S_{tt}h_{t-2}'.
			\end{equation}
			From all of the other rows of (\ref{eq47}), we know $S_{i(t+1)}=0$ for $1\leq i\leq (N+1)$ except when $i=t-1$ or $i=t+1$.
			
			Consider $L_{(t-2)(t+1)}=R_{(t-2)(t+1)}$. The $(t-2)$th row of $S$ is $(\cdots,\ 0,\ S_{(t-2)(t-2)},\ 0,\ 0,\ 0,\ *,\ \cdots)$ and the $(t+1)$th column of $A$ is $$(\cdots,\ 0,\ h_{t-2},\ 0,\ -h_{t-1},\ 0,\ \cdots)^T.$$ Hence, $L_{(t-2)(t+1)}=0$. The $(t-2)$th row of $\tilde{A}$ is $$(\cdots,\ 0,\ -h_{t-5}',\ 0,\ h_{t-4}',\ 0,\ \cdots)$$ and the $(t+1)$th column of $S$ is $$(\cdots,\ 0,\ S_{(t-1)(t+1)},\ 0,\ S_{(t+1)(t+1)},\ 0,\ \cdots)^T.$$ Hence, $0=R_{(t-2)(t+1)}=h_{t-4}'S_{(t-1)(t+1)}$. Since the atypical case of $h_{t-4}=0$ is omitted, from $|h_{t-4}|=|h_{t-4}'|$ we know $h_{t-4}'\neq0$, which implies $S_{(t-1)(t+1)}=0$. Then (\ref{eq48}) becomes
			\begin{equation}\label{eq50}
			h_{t-3}S_{(t-1)(t-1)}=S_{tt}h_{t-3}'.
			\end{equation}
			
			Now we have proved that $S_{\sigma(t+1)}$ are all zeros except $S_{(t+1)(t+1)}$. Then the above deduction can be carried on with $t$ replaced by $t+1$, and (\ref{eq50}) now becomes
			\begin{equation}\label{eq51}
			h_{t-2}S_{tt}=S_{(t+1)(t+1)}h_{t-2}'.
			\end{equation}
			Since the atypical cases of $h_{t-3}=0$ and $h_{t-2}=0$ are excluded, (\ref{eq50}) implies $S_{tt}\neq0$, and (\ref{eq51}) implies $S_{(t+1)(t+1)}\neq0$. Hence, from (\ref{eq49}) and (\ref{eq51}) we know $S_{(t+1)(t+1)}=\pm1$ and $|h_{t-2}|=|h_{t-2}'|$, which completes the proof of Lemma \ref{lem5}.			
		\end{proof}
	
    	Lemma \ref{lem5} confirms the form of the first $N-1$ columns of the $S$ matrix. Now, we prove that the $N$th and $(N+1)$th columns take a similar form. After using Lemma \ref{lem5}, the remaining undetermined part of (\ref{eq44}) becomes
		{\setlength\arraycolsep{0.8pt}\def\arraystretch{1.5}
			\begin{equation}
			\begin{array}{rl}
			&\left(
			\begin{array}{ccccc}
			\ddots &  & \vdots &\vdots  &\vdots \\
			& \pm1 & 0 & * & * \\
			\cdots & 0 & \pm1 & * & * \\
			\cdots & 0 & 0 & * & * \\
			\cdots & 0 &  0 &* & *\\
			\end{array}\right)\!\!\!\left(\addtolength{\arraycolsep}{-0.05cm}
			\begin{array}{ccccc}
			\ddots & \ddots &  &\vdots  &\vdots \\
			\ddots & 0 & h_{N-4} & 0 & \tilde{A}_{(N-2)(N+1)} \\
			& -h_{N-4} & 0 & h_{N-3} & 0 \\
			\cdots & 0 & -h_{N-3} & 0 & \tilde{A}_{N(N+1)} \\
			\cdots & 0 &  0 &-h_{N-2} & 0\\
			\end{array}\right)\\
			\!\!&\!\!=\left( \addtolength{\arraycolsep}{-0.05cm}
			\begin{array}{ccccc}
			\ddots & \ddots &  &\vdots  &\vdots \\
			\ddots & 0 & h_{N-4}' & 0 & \tilde{A}_{(N-2)(N+1)}' \\
			& -h_{N-4}' & 0 & h_{N-3}' & 0 \\
			\cdots & 0 & -h_{N-3}' & 0 & \tilde{A}_{N(N+1)}' \\
			\cdots & 0 &  0 &-h_{N-2}' & 0\\
			\end{array}\right)\!\!\!\left(
			\begin{array}{ccccc}
			\ddots &  & \vdots &\vdots  &\vdots \\
			& \pm1 & 0 & * & * \\
			\cdots & 0 & \pm1 & * & * \\
			\cdots & 0 & 0 & * & * \\
			\cdots & 0 &  0 &* & *\\
			\end{array}\right)\!\!.\\
			\end{array}
			\end{equation}
		}
		From $L_{\sigma (N-1)}=R_{\sigma (N-1)}$, $h_{N-4}S_{\sigma (N-2)}-h_{N-3}S_{\sigma N}=S_{(N-1)(N-1)}\tilde{A}_{\sigma (N-1)}'$, which is
		{\addtolength{\arraycolsep}{-0.22cm}
			\begin{equation}\label{eq54}
			\!\!\left(
			\begin{array}{c}
			\vdots\\
			-h_{N-3}S_{(N-3)N}\\
			h_{N-4}S_{(N-4)(N-4)}-h_{N-3}S_{(N-2)N}\\
			-h_{N-3}S_{(N-1)N}\\
			-h_{N-3}S_{NN}\\
			-h_{N-3}S_{(N+1)N}\\
			\end{array}
			\right)\!\!\!=\!S_{(N-1)(N-1)}\!\!\left(
			\begin{array}{c}
			\vdots\\
			0\\
			h_{N-4}'\\
			0\\
			-h_{N-3}'\\
			0\\
			\end{array}
			\right)\!\!.
			\end{equation}
		}
		Hence, we have
		\begin{equation}\label{eq55}
		h_{N-4}S_{(N-4)(N-4)}-h_{N-3}S_{(N-2)N}=S_{(N-1)(N-1)}h_{N-4}',
		\end{equation}
		\begin{equation}\label{eq56}
		h_{N-3}S_{NN}=S_{(N-1)(N-1)}h_{N-3}',
		\end{equation}
		and $S_{iN}=0$ for $i=1,\ \cdots,\ N-3,\ N-1,\ N+1$. Now consider
		\begin{equation*}
		\begin{split}
		L_{(N-1)(N-2)}&=(\cdots,\ 0,\ S_{(N-1)(N-1)},\ 0,\ S_{(N-1)(N+1)})\\
		& \ \ \ \ \ (\cdots,\ 0,\ -h_{N-4},\ 0,\ 0)^T \\
		&=-h_{N-4}S_{(N-1)(N-1)}. \\
		R_{(N-1)(N-2)}&=(\cdots,\ -h_{N-4}',\ 0,\ h_{N-3}',\ 0)\\
		&\quad\quad (\cdots,\ 0,\ S_{(N-2)(N-2)},\ 0,\ 0,\ 0)^T \\
		&=-h_{N-4}'S_{(N-2)(N-2)}.
		\end{split}
		\end{equation*}
		Hence, we have
		\begin{equation}\label{eq57}
		h_{N-4}S_{(N-1)(N-1)}=h_{N-4}'S_{(N-4)(N-4)}.
		\end{equation}
		Since $S_{(N-1)(N-1)}^2=S_{(N-2)(N-2)}^2=1$, (\ref{eq57}) is equivalent to $h_{N-4}S_{(N-2)(N-2)}=h_{N-4}'S_{(N-1)(N-1)}$, which can be substituted into (\ref{eq55}) and lead to $S_{(N-2)N}=0$. Now the only non-zero element in $S_{\sigma N}$ is $S_{NN}$ and its magnitude is still undetermined.
		
		The equations $$L_{\sigma N}=R_{\sigma N}$$ and $$h_{N-3}S_{\sigma (N-1)}-h_{N-2}S_{\sigma (N+1)}= S_{NN}\tilde{A}_{\sigma N}'$$ yield that
		\begin{equation}
		\!\!\!\!\!\!\left(\!\!\!
		\begin{array}{c}
		\vdots\\
		-h_{N-2}S_{(N-2)(N+1)}\\
		h_{N-3}S_{(N-1)(N-1)}-h_{N-2}S_{(N-1)(N+1)}\\
		-h_{N-2}S_{N(N+1)}\\
		-h_{N-2}S_{(N+1)(N+1)}\\
		\end{array}
		\!\!\!\right)\!\!\!=\!\!\!\left(\!\!
		\begin{array}{c}
		\vdots\\
		0\\
		S_{NN}h_{N-3}'\\
		0\\
		-S_{NN}h_{N-2}'\\
		\end{array}
		\!\!\right).
		\end{equation}
		We thus have
		\begin{equation}\label{eq59}
		h_{N-3}S_{(N-1)(N-1)}-h_{N-2}S_{(N-1)(N+1)}=S_{NN}h_{N-3}',
		\end{equation}
		\begin{equation}\label{eq60}
		h_{N-2}S_{(N+1)(N+1)}=h_{N-2}'S_{NN},
		\end{equation}
		and $S_{i}=0$ for $i=1,\ \cdots,\ N-2,\ N$. (The case of $N<3$ can be easily analyzed and we omit the analysis.) Since $S$ is non-singular, we must have $S_{NN}\neq0$ and $S_{(N+1)(N+1)}\neq0$.
		
		From $L_{1(N+1)}=R_{1(N+1)}$, we have $$\tilde{A}_{1(N+1)}=\tilde{A}_{1(N+1)}'S_{(N+1)(N+1)},$$ which using (\ref{eq43}) is
		\begin{equation}
		\frac{h_{N-1}\cdot h_\beta h_2\cdots h_{N-1}}{h_\alpha h_1\cdots h_{N-2}} =\frac{h_{N-1}'\cdot h'_\beta h_2'\cdots h_{N-1}'}{h'_\alpha h_1'\cdots h_{N-2}'}S_{(N+1)(N+1)}.
		\end{equation}
		Since $|h_{i}|=|h_{i}'|$ for $1\leq i\leq N-4$, we further have
		\begin{equation}\label{eq64}
		\left|\frac{h_{N-3}h_{N-1}^2}{h_{N-2}} \right|=\left|\frac{h_{N-3}'h_{N-1}'^2}{h_{N-2}'}S_{(N+1)(N+1)}\right|.
		\end{equation}
		We substitute (\ref{eq56}) and (\ref{eq60}) into (\ref{eq64}) to obtain
		\begin{equation}\label{eq65}
		S_{NN}^2=\frac{h_{N-1}^2}{h_{N-1}'^2}.
		\end{equation}
		Now consider $L_{(N-2)(N+1)}=R_{(N-2)(N+1)}$,
		\begin{equation}
		\begin{split}
		&S_{(N-2)(N-2)}\tilde{A}_{(N-2)(N+1)} \\
		&=h_{N-4}'S_{(N-1)(N+1)}+\tilde{A}_{(N-2)(N+1)}'S_{(N+1)(N+1)}.
		\end{split}
		\end{equation}
		which is
		\begin{equation}\label{eq66}
		\begin{split}
		&h_{N-4}'S_{(N-1)(N+1)}\\
		&=\frac{h_{N-3}h_{N-1}^2}{h_{N-4}h_{N-2}}S_{(N-2)(N-2)} -\frac{h_{N-3}'h_{N-1}'^2}{h_{N-4}'h_{N-2}'}S_{(N+1)(N+1)}.
		\end{split}
		\end{equation}
		Using (\ref{eq56}), (\ref{eq57}), (\ref{eq60}) and (\ref{eq65}), we can eliminate all $h_i'$ for $i=N-4,\ \cdots,\ (N-1)$:
		\begin{equation}
		\begin{array}{rl}
		&\ \ \ \ \frac{h_{N-3}'h_{N-1}'^2}{h_{N-4}'h_{N-2}'}S_{nn}\\
		&\!\!\!\!=\frac{h_{N-3}S_{NN}}{S_{(N-1)(N-1)}}\frac{h_{N-1}^2}{S_{NN}^2} \frac{S_{(N-2)(N-2)}}{h_{N-4}S_{(N-1)(N-1)}}\frac{S_{NN}}{h_{N}S_{(N+1)(N+1)}}S_{(N+1)(N+1)}\\
		&\!\!\!\!=\frac{h_{N-3}h_{N-1}^2}{h_{N-4}h_{N-2}}S_{(N-2)(N-2)},\\
		\end{array}
		\end{equation}
		which implies that the RHS of (\ref{eq66}) is zero. Hence, $S_{(N-1)(N+1)}=0$, and $S$ is diagonal.
		
		Eq. (\ref{eq59}) is now $h_{N-3}S_{(N-1)(N-1)}=S_{NN}h_{N-3}'$, which is equivalent to
		\begin{equation}\label{eq67}
		h_{N-3}/S_{NN}=S_{(N-1)(N-1)}h_{N-3}'.
		\end{equation}
		Substituting (\ref{eq67}) into (\ref{eq56}), we obtain $S_{NN}=\pm1$. Hence, from (\ref{eq67}) and (\ref{eq65}) we know $|h_{N-3}|=|h_{N-3}'|$ and $|h_{N-1}|=|h_{N-1}'|$. Now the only undetermined unknown parameter is $h_{N}$.
		
		From $|L_{N(N+1)}|=|R_{N(N+1)}|$, we have $|\tilde{A}_{N(N+1)}|=\\|\tilde{A}_{N(N+1)}'S_{(N+1)(N+1)}|$ which using (\ref{eq43}) is
		\begin{equation}\label{eq68}
		\left|h_{N-2}+\frac{h_{N-1}^2}{h_{N-2}}\right|=\left|\left(h_{N-2}'+\frac{h_{N-1}'^2}{h_{N-2}'}\right)S_{(N+1)(N+1)}\right|.
		\end{equation}
		Substitute (\ref{eq60}) into (\ref{eq68}) to eliminate $S_{(N+1)(N+1)}$:
		\begin{equation}
		\left|h_{N-2}+\frac{h_{N-1}^2}{h_{N-2}}\right|=\left|\left(h_{N-2}'+\frac{h_{N-1}'^2}{h_{N-2}'}\right)\frac{h_{N-2}'S_{NN}}{h_{N-2}}\right|,
		\end{equation}
		which is
		\begin{equation}\label{eq70}
		\left|h_{N-2}^2+h_{N-1}^2\right|=\left|h_{N-2}'^2+h_{N-1}'^2\right|.
		\end{equation}
		Eq. (\ref{eq70}) means $|h_{N-2}|=|h_{N-2}'|$. Together with Lemma \ref{lem5}, we prove that all of the unknown parameters are able to be estimated by the two-qubit sensor when $N$ is odd.
We conclude that when measuring $Z_\alpha Y_\beta$, the information extracted from the sensor is enough to identify the coupling parameters. Therefore, the answer to Problem \ref{problem1} is positive.
\end{proof}
\section{Capability test when  measuring $Y_\alpha Z_\beta$}\label{section4}
For \textbf{Problem 2}, where the initial state is chosen to be $X_\beta$ and the measurement is $Y_\alpha Z_\beta$, the accessible set is
\begin{equation}\label{G3y1z2}
G^3=\{Y_\alpha Z_\beta,\ X_\beta,\ Z_\beta Y_1,\ Y_\alpha X_\beta Y_1,\ X_\alpha Y_\beta Y_1,\ Y_\alpha Z_1,\ \cdots \ \}.
\end{equation}
Denote the total number of spins in the spin chain system as $N$, the number of operators in $G^3$ is $\frac{1}{2}(N+2)^3-\frac{1}{2}(N+2)^2$ \cite{qiyu2019accessibleset}. Therefore, the dimension of the state space model increases quickly with increasing $N$.

The generation of operators in $G^3$ follows a special pattern (see Appendix \ref{ASY1Z2}). Given the accessible set $G^3$, we choose the order of element operators in the state vector $\mathbf{x}$ to be the order as shown in Fig. \ref{GenerationRulesN3} and Fig. \ref{GenerationRulesN4} which follows a set of specific generation rules \cite{qiyu2019accessibleset}. As an example, we present the matrix $A$ for $N=2$ in \eqref{AmatrixY1Z2}. For higher dimensional cases, we may employ a similar method to obtain the matrix $A$ , although this task is usually complicated. Since the measurement is $Y_\alpha Z_\beta$ and the initial state is $X_\beta$, we have
\renewcommand{\arraystretch}{0.0}
{\tiny
\begin{figure*}
\begin{equation}\label{AmatrixY1Z2}
A = \setlength\arraycolsep{0.0pt}
 \left(
\begin{array}{cccccccccccccccccccccccc}
0           &-h_\alpha &0             &-h_\beta &0            &h_\beta &0           &0            &0           &0           &0            &        0 &        0 &0           &0         &0         & 0         &0         & 0         &0         &0         &         0&         0& 0\\
h_\alpha&0             &h_\beta    &0            &0            &0          &0           &0            &0            &0          &0            &        0 &        0 &0           &0         &0         & 0         &0         & 0         &0         &0         & 0        & 0        & 0\\
0           &-h_\beta  &0             &-h_\alpha&0            &0          &h_\alpha&0            &0            &-h_1      &0            &        0 &        0 &0           &0         &0         & 0         &0         & 0         &0         &0         & 0        & 0        & 0\\
h_\beta &0             &h_\alpha  &0             &-h_\beta &0          &0           &-h_\alpha&0            &0          &-h_1        &        0 &        0 &0           &0         &0         & 0         &0         & 0         &0         &0         & 0        & 0        & 0\\
0          & 0            & 0            &h_\beta    &0           &-h_\beta&0           &0            &-h_\alpha&0          &0            &h_1       &      0  &-h_1       &0         &0         & 0         &0         & 0         &0         &0         & 0        & 0        & 0\\
-h_\beta& 0            & 0            &0             &h_\beta &0           &0           &0            &0            &0          &0            &0          &      0  &0            &h_1      &0         & 0         &0         & 0         &0         &0         & 0        & 0        & 0\\

0          & 0            & -h_\alpha &0             &0           &0           &0           &0            &h_\alpha &0          &0            &0          &      0  &0            &0         &-h_1    & 0         &0         & 0         &0         &0         & 0        & 0        & 0\\
0          & 0            & 0           &h_\alpha   &0           &0          &-h_\alpha&0            &-h_\beta  &0          &0            &0          &      0  &0            &0         &0         &-h_1     &0         & 0         &0         &0         & 0        & 0        & 0\\
0          & 0            & 0            &0             &h_\alpha&0           &0           &h_\beta   &0            &0          &0            &0          &      0  &0            &0         &0         & 0         &h_1       & 0         &-h_1   &0         & 0        & 0        & 0\\
0          & 0            & h_1         &0             &0           &0           &0           &0            &0            &0         &-h_\alpha&0          &      0  &0            &0        &h_\alpha& 0         &0         & 0         &0         &0         & 0        & 0        & 0\\
0          & 0            & 0            &h_1         &0           &0           &0           &0            &0            &h_\alpha&0            &h_\beta &      0  &0            &0        &0        &-h_\alpha&0         & 0         &0         &0         & 0        & 0        & 0\\
0          & 0            & 0            &0             &-h_1     &0           &0           &0            &0            &0          &-h_\beta  &0          &h_1     &0            &0         &0         & 0       &-h_\alpha& 0         &0         &0         & 0        & 0        & 0\\

0          & 0            & 0            &0             &0           &0           &0           &0            &0            &0          &0            &-h_1    &      0  &h_1        &0         &0         & 0         &0       &-h_\alpha&0         &0         & 0        & 0        & 0\\
0          & 0            & 0            &0             &h_1        &0           &0           &0            &0            &0          &0            &0          &-h_1   &0            &h_\beta&0         & 0         &0         & 0      &-h_\alpha&0         & 0        & 0        & 0\\
0          & 0            & 0            &0             &0           &-h_1      &0           &0            &0            &0          &0            &0          &      0  &-h_\beta &0         &0         & 0         &0         & 0         &0         &0         & 0        & 0        & 0\\
0          & 0            & 0            &0             &0           &0           &h_1        &0            &0          &-h_\alpha&0            &0          &      0  &0            &0         &0         &h_\alpha&0         & 0         &0       &-h_\beta& 0        & 0        & 0\\
0          & 0            & 0            &0             &0           &0           &0           &h_1         &0            &0          &h_\alpha &0          &      0  &0            &0        &-h_\alpha& 0       &h_\beta&0         &0         &0        &-h_\beta& 0        & 0\\
0          & 0            & 0            &0             &0           &0           &0           &0            &-h_1        &0          &0           &h_\alpha&      0  &0            &0         &0         &-h_\beta&0         &h_1      &0         &0         & 0        &h_\beta& 0\\

0          & 0            & 0            &0             &0           &0           &0           &0            &0            &0          &0            &0        &h_\alpha&0            &0         &0         & 0         &-h_1   & 0         &h_1     &0         & 0       & 0   &-h_\beta\\
0          & 0            & 0            &0             &0           &0           &0           &0            &h_1        &0          &0            &0          &      0  &h_\alpha &0         &0         & 0         &0         &-h_1       &0         &0         & 0        & 0        & 0\\
0          & 0            & 0            &0             &0           &0           &0           &0            &0            &0          &0            &0          &      0  &0            &0         &h_\beta& 0         &0         & 0         &0         &0       &h_\alpha& 0        & 0\\
0          & 0            & 0            &0             &0           &0           &0           &0            &0            &0          &0            &0          &      0  &0            &0         &0         &h_\beta&0         & 0         &0         &-h_\alpha& 0       &-h_\beta& 0\\
0          & 0            & 0            &0             &0           &0           &0           &0            &0            &0          &0            &0          &      0  &0            &0         &0         & 0        &-h_\beta& 0         &0         &0         &h_\beta& 0        &h_1\\
0          & 0            & 0            &0             &0           &0           &0           &0            &0            &0          &0            &0          &      0  &0            &0         &0         & 0         &0         &-h_\beta&0         &0         & 0        &-h_1    & 0\\
\end{array}
\right)
\end{equation}
\end{figure*}}
 \begin{equation}\label{BY1Z2}
 B=(0 \  1\  0 \  0\  0\  \cdots\ )^T
 \end{equation}
 and
 \begin{equation}\label{CY1Z2}
C=(1 \  0\  0\  0 \  0 \ \cdots\ ).
 \end{equation}
Numerical results show that for an arbitrary $N$, the state space model obtained for this case may not be minimal. To use the STA method, one has to obtain a minimal state space model with the assistance of the SPT method \cite{yuanlong2020}. However, the dimension of the state space model increases rapidly with the order of $O(N^3)$ while the corresponding minimal system has a much lower dimension. Thus, even with the help of the SPT method, the structure of the state space model will be dramatically changed when obtaining the minimal system, which makes the STA method difficult to be straightforwardly applied.

Here, we leave the sensing capability determination problem for an arbitrary $N$ as an open problem and provide readers with the Gr$\ddot{\text{o}}$bner basis method in \cite{Akira} to demonstrate how to determine the capability for a given $N$ using this method. Gr$\ddot{\text{o}}$bner basis method is a widely used algorithmic solution for a set of multivariate polynomials \cite{BB2001}.

The main idea of Gr$\ddot{\text{o}}$bner basis method is that: for a given state space model, one can obtain the system transfer function $G(s)$ with unknown parameters according to \eqref{StateSpace2Transfer}. On the other hand, one can also reconstruct a system transfer function $\hat{G}(s)$ from the measurement data using the ERA method \cite{junzhang2014}. Since the two transfer functions describe the same input-output behavior, we have
\begin{equation}\label{Transfequal}
 \hat{G}(s)=G(s).
\end{equation}
By equating the coefficients in \eqref{Transfequal}, a polynomial set $F$ with respect to unknown parameters in $\{h_i\}$ can be obtained. The unknown parameters $\{h_i\}$ can then be inferred by solving the polynomials in the set $F$. The sensing capability is then related to analysing the number of the solutions of $F$. If a solution of $F$ exists and is unique, the parameters are identifiable. Details of the algorithm for Hamiltonian identifiability can be found in \cite{Akira}.



As an example, we illustrate the use of the Gr$\ddot{\text{o}}$bner basis method for determining the capability of quantum sensor when there are two spins in the object system.
We have the following conclusion.
\begin{theorem}\label{Y1Z2identifiable}
	For a system with the state space model given as \eqref{AmatrixY1Z2}, \eqref{BY1Z2} and \eqref{CY1Z2} and , the unknown parameters $\{h_i\}$ appearing in matrix $A$ can be identified.
\end{theorem}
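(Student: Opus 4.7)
The plan is to apply the Gr\"obner basis method outlined immediately before the theorem statement. Recall that by the paper's conventions $h_\alpha$ is a designable and therefore known constant, so the unknown parameters in $A$ are $h_\beta$ and $h_1$. First, I would symbolically compute the transfer function
\begin{equation}
G(s)=C(sI-A)^{-1}B
\end{equation}
from the $24\times 24$ matrix $A$ in \eqref{AmatrixY1Z2} together with $B$ and $C$ from \eqref{BY1Z2} and \eqref{CY1Z2}. Because $B$ is the second standard basis vector and $C$ is the first, $G(s)$ collapses to the single scalar entry $[(sI-A)^{-1}]_{1,2}$, which keeps the resulting expression a rational function in $s$ whose numerator and denominator have coefficients polynomial in $h_\beta$ and $h_1$ (with $h_\alpha$ entering only as a known constant).

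Next, I would match $G(s)$ to the transfer function $\hat{G}(s)$ reconstructed from input--output data via the ERA method. Clearing denominators and equating coefficients of like powers of $s$ in the numerator and denominator of $\hat G(s)=G(s)$ yields a finite polynomial system $F\subset\mathbb{R}[h_\beta,h_1]$ whose common zeros are exactly the parameter settings consistent with the observed measurements. Identifiability of the magnitudes $|h_\beta|$ and $|h_1|$ is therefore equivalent to showing that the variety $V(F)$ is finite and its points all share the same absolute values coordinatewise.

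To extract this information, I would compute a reduced Gr\"obner basis of the ideal $\langle F\rangle$ with respect to a lexicographic ordering such as $h_1 > h_\beta$. By the elimination property, the basis should contain a univariate polynomial in $h_\beta$ alone, together with further polynomials that express $h_1$ as a function of $h_\beta$. Inspecting these triangularised equations, one expects only the sign ambiguities $h_\beta\mapsto -h_\beta$ and $h_1\mapsto -h_1$, which are invisible in the magnitudes and thus harmless for the notion of identifiability used in the paper. Combined with the transfer function equation this yields the conclusion that $\{h_i\}$ is recoverable from the sensor's output.

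The main obstacle will be the computational bulk of symbolically inverting the $24\times 24$ matrix $sI-A$ and then carrying out the Gr\"obner basis reduction; this is where the proof must rely on a computer algebra system rather than hand calculation. A secondary subtlety is excluding spurious roots: should the Gr\"obner basis exhibit solutions beyond the expected sign flips, I would need to argue they are either excluded by genericity (the ``almost any value of $\{h_i\}$'' proviso used elsewhere in the paper) or collapse to the same magnitudes, thereby preserving the stated identifiability conclusion.
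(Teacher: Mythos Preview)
Your proposal follows the same transfer-function-plus-Gr\"obner-basis route as the paper, and would succeed. Two tactical choices in the paper make the execution much lighter than you anticipate, however. First, the paper keeps $h_\alpha$ as a third unknown and introduces the substitution $\theta_1=h_\alpha$, $\theta_2=h_\beta^2$, $\theta_3=h_1^2$ before forming the polynomial system; because the transfer-function coefficients are polynomials in $h_\alpha$ and in the \emph{squares} $h_\beta^2,h_1^2$, this makes the system \emph{linear} in $\theta_1,\theta_2,\theta_3$. Second, the paper matches only the three leading coefficients (two from the numerator, one from the denominator), giving exactly three equations in three unknowns. The Gr\"obner basis is then trivially $\{\theta_1-a_1,\ \theta_2-a_2,\ \theta_3-a_3\}$ with explicit $a_i$, so uniqueness is immediate and no sign-flip or genericity discussion is needed.

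So your worries about the $24\times24$ symbolic inversion and about spurious roots in the Gr\"obner basis both evaporate once you pass to the squared variables and restrict to the leading coefficients: the computation becomes a hand calculation rather than a CAS exercise.
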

\begin{proof}
The Hamiltonian is
\begin{equation}
\begin{split}
H=&\frac{h_\alpha}{2}(X_\alpha X_\beta+Y_\alpha Y_\beta)+\frac{h_\beta}{2}(X_\beta X_1+Y_\beta Y_1)\\
&+\frac{h_1}{2}(X_1 X_2+Y_1 Y_2).
\end{split}
\end{equation}
The corresponding accessible set is
\begin{equation}\label{G^3_4}
\begin{split}
G^3_4=   \{ &Y_\alpha Z_\beta,\ X_\beta,\ Z_\beta Y_1,\ Y_\alpha X_\beta Y_1,\ Y_\alpha Z_1,\   Y_\alpha Y_\beta X_1,\ X_\alpha Y_\beta Y_1,\\
&Z_\alpha Y_1,\ Z_\alpha X_\beta Z_1,\  Y_\alpha Y_\beta Z_1Y_2,\ Y_\alpha X_1Y_2,\ Z_\alpha X_\beta X_1Y_2,\\
&Y_\alpha Z_2,\  Z_\alpha X_\beta Z_2,\ Z_\alpha Z_\beta Y_1Z_2, \ Y_\alpha Y_1X_2,\ Z_\alpha X_\beta Y_1X_2,\\
& Z_\alpha Z_\beta X_2,\ Y_\alpha X_\beta Z_1X_2,\ Z_\alpha Z_1X_2,\ Z_\alpha Y_\beta X_1X_2,\\
&Z_\beta Z_1X_2,\ X_\alpha Y_\beta Z_1X_2,\ X_\alpha X_1X_2\}.
\end{split}
\end{equation}
The number of element operators in $G^3_2$ is 24. The corresponding matrix $A$  is given in \eqref{AmatrixY1Z2}. $B$ and $C$ are given in \eqref{BY1Z2} and \eqref{CY1Z2}. We have
\begin{equation}
\begin{split}
&G(s)=C(sI-A)^{-1}B \\
&=-\frac{h_\alpha s^{10}+(10h_\alpha^3+7h_\alpha h_\beta^2+11h_\alpha h_1^2)s^8+\ \cdots}{s^{12}+11(h_\alpha^2+h_\beta^2+h_1^2)s^{10}+\ \cdots}.
\end{split}
\end{equation}
We sort the numerator and the denominator of the transfer function by the order of $s$ and only present items of the highest and second highest orders. The residual terms of lower order are omitted since the items of highest and second higher order are sufficient for determining the sensor capability. Moreover, we assume that the transfer function obtained by using the measurement data is
\begin{equation}
\hat{G}(s)=-\frac{v_1 s^{10}+v_2s^8+\ \cdots}{s^{12}+v_3s^{10}+\ \cdots}.
\end{equation}
From \eqref{Transfequal}, we have the following equation
\begin{equation}\label{N4Transfer}
\begin{split}
&-\frac{h_\alpha s^{10}+(10h_\alpha^3+7h_\alpha h_\beta^2+11h_\alpha h_1^2)s^8+\ \cdots}{s^{12}+11(h_\alpha^2+h_\beta^2+h_1^2)s^{10}+\ \cdots}\\
&=-\frac{v_1 s^{10}+v_2s^8+\ \cdots}{s^{12}+v_3s^{10}+\ \cdots},
\end{split}
\end{equation}
where $v_1$, $v_2$, $v_3,\  \cdots$ are real values obtained from the ERA method \cite{junzhang2014}.
Let $\theta_1=h_\alpha, \theta_2=h_\beta^2, \theta_3=h_1^2$, the polynomial equations of $\theta_1, \theta_2$ and $\theta_3$ are:
\begin{equation}
\begin{split}
 \theta_1&=v_1, \\
 10\theta_1^3+7\theta_1\theta_2+11\theta_1\theta_3&=v_2 ,\\
 11(\theta_1^2+\theta_2+\theta_3)&=v_3 , \\
\end{split}
\end{equation}
The Gr$\ddot{\text{o}}$bner basis of the above polynomials takes the following form:
\begin{equation}
\mathcal{G}= \{ \theta_1 - a_1,\  \theta_2 - a_2,\   \theta_3 - a_3\},
\end{equation}
where
\begin{equation}
\begin{split}
a_1&=v_1;\\
a_2&=\frac{-v_1^3+v_1 v_3-v_2}{4 v_1};\\
a_3&=\frac{-33 v_1^3-7 v_1 v_3+11 v_2}{44 v_1}.
\end{split}
\end{equation}
It can be seen that there is only one solution for the magnitudes of all of the parameters. Therefore, the sensor is capable of identifying these parameters.
\end{proof}



\section{Conclusion}\label{section5}
We have investigated the capability of a class of qubit sensors. The object system is a spin chain system whose Hamiltonian is described by unknown parameters. Qubit sensors are coupled to the object system in order to achieve information extraction. By initializing and probing the qubit sensor, we aim to estimate all of the unknown parameters. With a restricted initial setting and measurement schemes, we find that one qubit sensor is not capable of fully performing the estimation task. To solve this problem, we focus on analyzing two-qubit sensors. To determine their capability, the STA method and the Gr$\ddot{\text{o}}$bner basis method are employed. We prove that a two-qubit sensor can estimate all of the unknown parameters, which reveals an effective way to improve the capability of quantum sensors through increasing the number of qubits in the sensor.
\appendix
\begin{ack}                               
	The authors would like to thank Matthew James, Akira Sone and Paola Cappellaro for constructive suggestions and helpful discussion.
\end{ack}

\section{Accessible set}\label{APPX1}
\subsection{Accessible set $G^1$ when  measuring $Z_\alpha Y_\beta$}\label{ASZ1Y2}
Given the initial measurement operator $Z_\alpha Y_\beta$, the accessible set can expand itself using the iterative rules given in \cite{qiyu2019accessibleset}. We have
\begin{equation}
G^1=\{O_1^1,\ O_2^1,\  \cdots,\  O_i^1,\  \cdots\  \}.
\end{equation}
where
\begin{equation}
O_i^1=
\begin{cases}
Z_{\alpha}Z_{\beta} \cdots Z_{i-1}Y_i, \quad \text{$i$ is even},\\
Z_{\alpha}Z_{\beta} \cdots Z_{i-1}X_i, \quad \text{$i$ is odd} .
\end{cases}
\end{equation}

\subsection{Accessible set $G^2$ when measuring $Y_\alpha Z_\beta$}\label{ASY1Z2}
For measurement scheme $Y_\alpha Z_\beta$, the generation of accessible set does have a pattern when there are an arbitrary number of spins in the object system. However, it is difficult to give a simplified uniform description for the general pattern. Here, we provide accessible sets for the cases in which there are only $3$, $4$ and $5$ spins in the whole system (including the sensor). Let the accessible set of $i$ spins be $G^3_i$ and $G^3_{\lfloor i}=G^3_i-G^3_{ i-1}$. The operator $``-"$ acting on arbitrary sets $A$ and $B$ by $A-B$ indicates the subtraction of set $B$ from set $A$. When the measurement is $Y_\alpha Z_\beta$, we have $G^3_3\subset G^3_4\subset G^3_5$. We present the set $G^3_3$ in Fig. \ref{GenerationRulesN3}, the set $G^3_{\lfloor 4}$ in Fig. \ref{GenerationRulesN4} and the set $G^3_{\lfloor 5}$ in Fig. \ref{GenerationRulesN5}. It can be seen that the generation of set $G^3_{\lfloor i}$ has a shared pattern. 
\begin{figure}	
	\centering		
	\includegraphics[width=8cm]{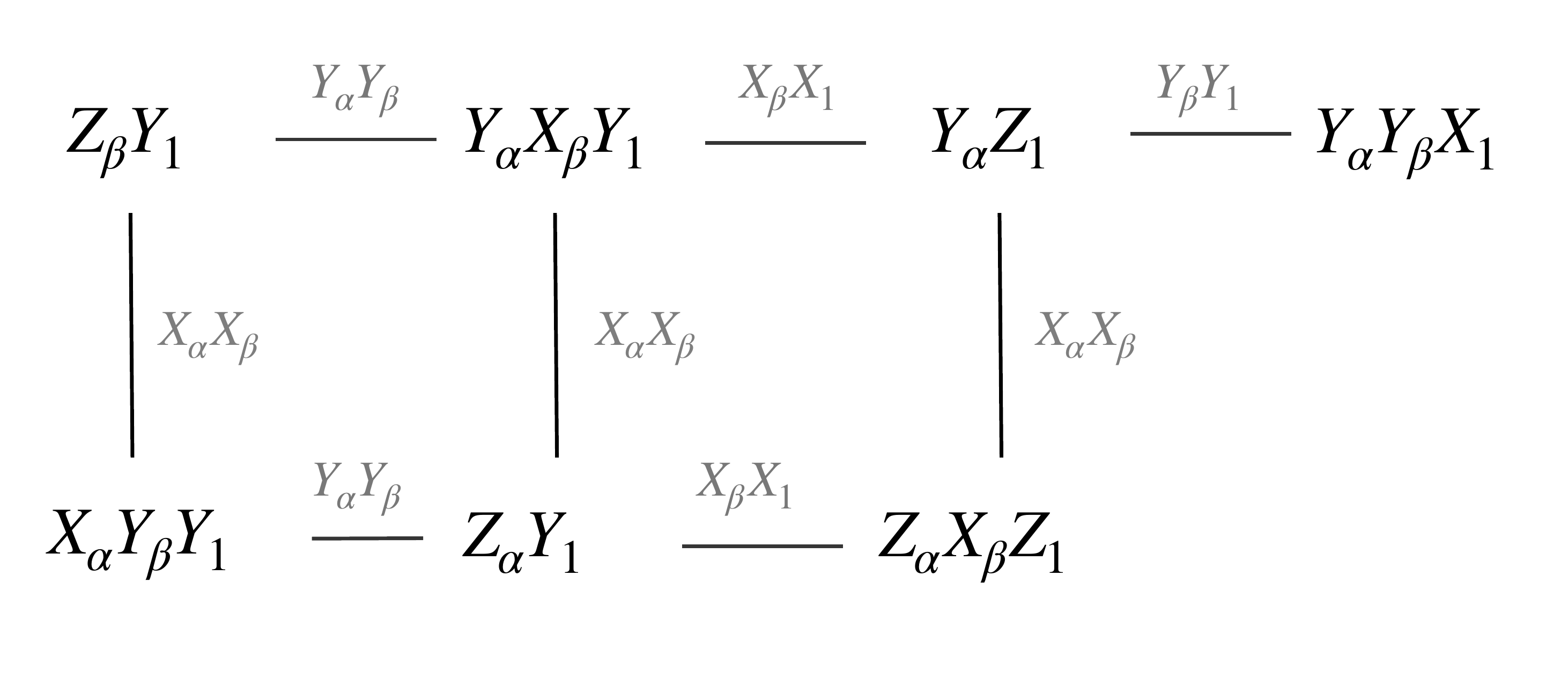}		
	\caption{The generation procedure for the system with $3$ spins (including the sensor). The operators on the vertices form the set $G^3_3$.}
	\label{GenerationRulesN3}		
\end{figure}
\begin{figure}	
	\centering		
	\includegraphics[width=8cm]{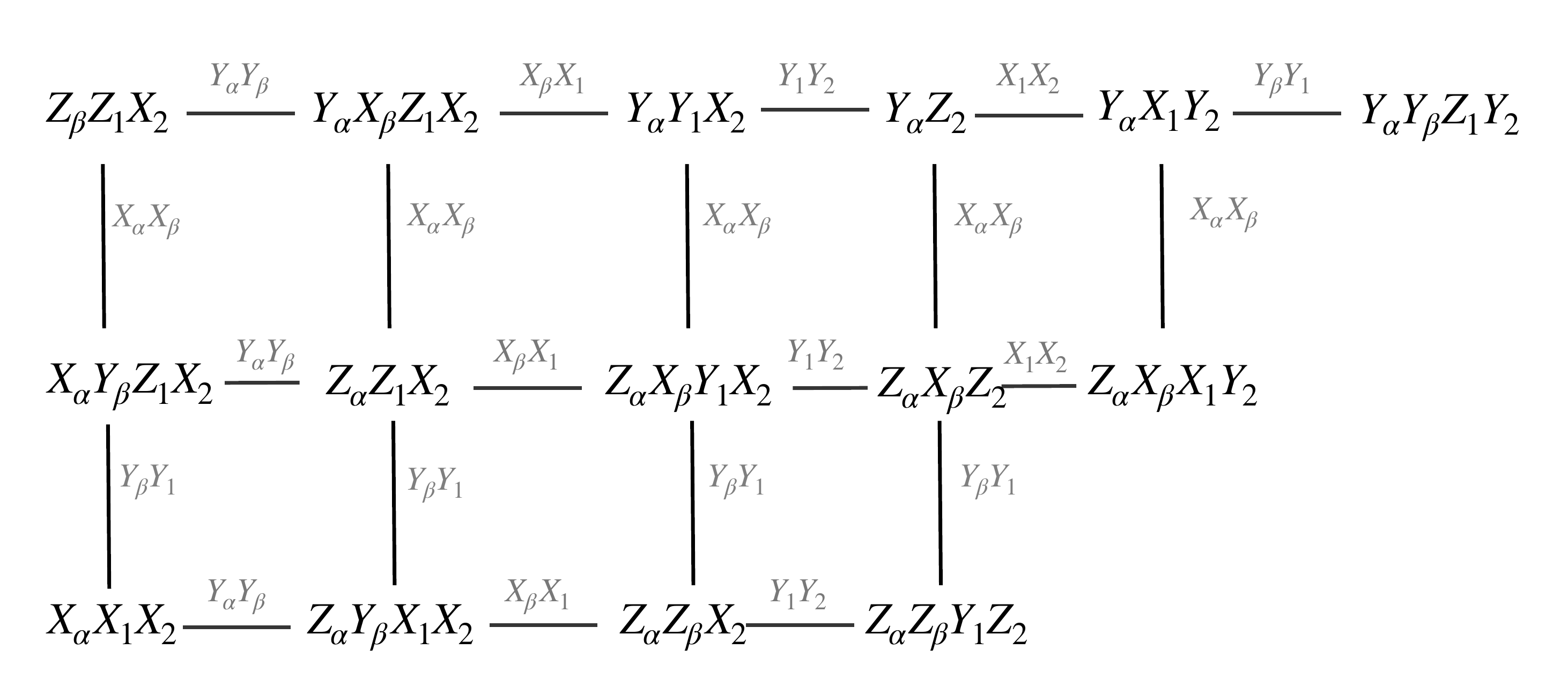}		
	\caption{The generation procedure for $G_{\lfloor 4}^3$. The bold operators on the vertices form the set $G_{\lfloor 4}^3$. }
	\label{GenerationRulesN4}		
\end{figure}
\begin{figure}	
	\centering		
	\includegraphics[width=8cm]{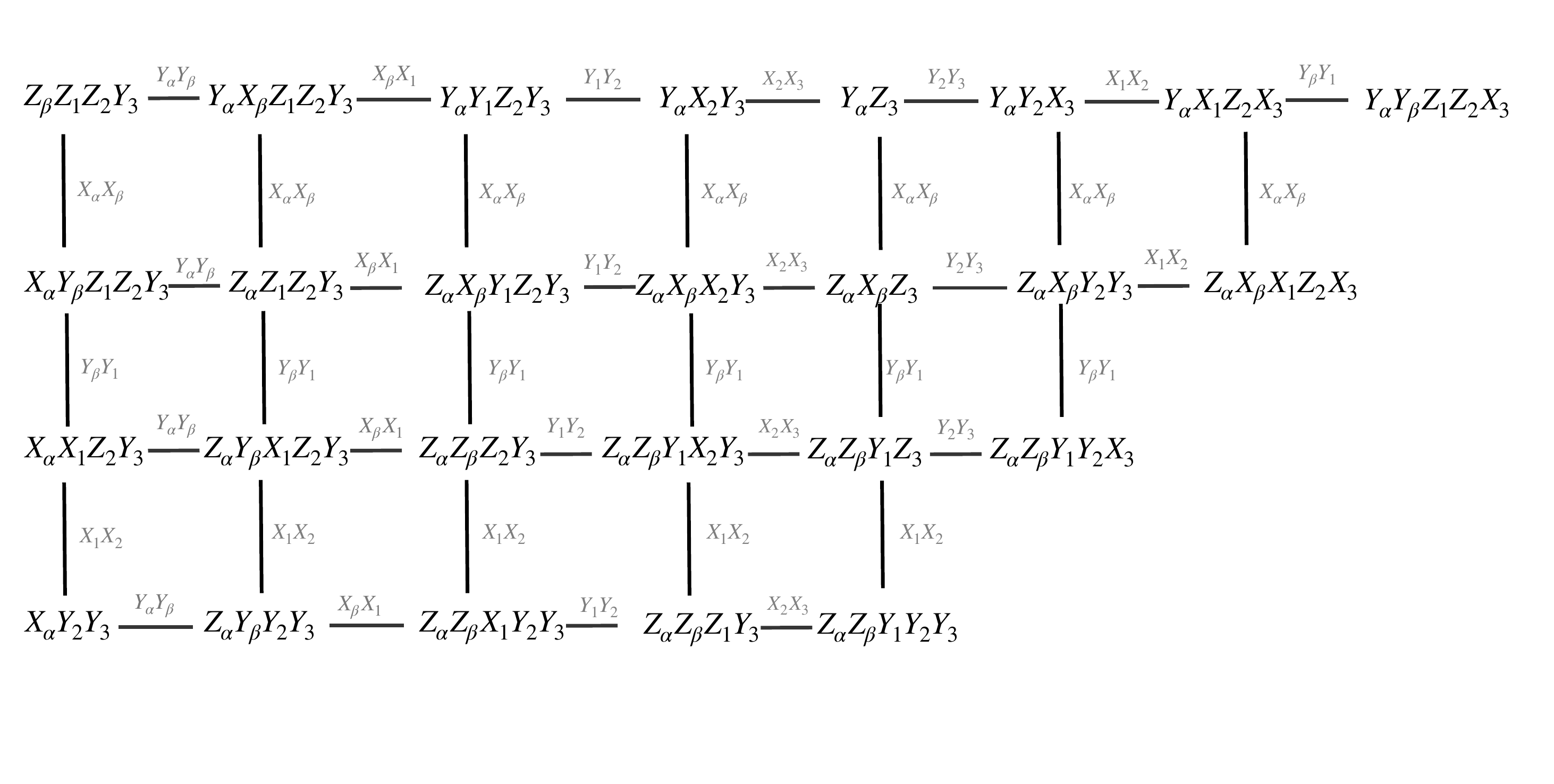}		
	\caption{The generation procedure for $G^3_{\lfloor 5}$. The bold operators above form the set $G^3_{\lfloor 5}$.}
	\label{GenerationRulesN5}		
\end{figure}
\section{PROOF OF LEMMA \ref{lem1}}\label{APPlem1}
By induction we can prove $$A^kB=[(*,\ \cdots,\ *,\ (-1)^{k}h_\alpha h_\beta \prod_{i=1}^{k-2}h_i,\ 0,\ \cdots,\ 0)^T]_{(N+2)\times1}$$ for $1\leq k\leq (N+1)$ where $*$ are polynomials on $h_\alpha$, $h_\beta$ and $\{h_i\}$, and the last $N+1-k$ elements are zero. Therefore, $\text{CM}$ is an upper triangular matrix and its determinant is $$\det(\text{CM})=h_\alpha^{N+1}h_\beta^N\prod_{k=3}^{N+1} (-1)^{k}\prod_{i=1}^{k-2}h_i,$$ which is non-zero for almost any value of $\{h_i\}$. Hence $\text{CM}$ is almost always full-ranked.

\section{PROOF OF LEMMA \ref{lem2}}\label{APPlem2}
\setlength\arraycolsep{1.5pt}\def\arraystretch{1.5}
Since the rank of the observability matrix does not change when the system undergoes a similarity transformation, we can consider $\overline{\text{OM}}$ of the system $(\bar{A},\bar{C})$, where $\bar{A}=PAP^{-1}$ and $\bar{C}=CP^{-1}$ for any non-singular $P$. We take $I_{(N+2)\times(N+2)}$ and rearrange all its even rows in the ascending order, followed by all of the odd rows also in the ascending order, to form $P$. After this similarity transformation, $\bar{A}=\left(\begin{matrix}0& \tilde{A} \\ -\tilde{A}^T &0\end{matrix}\right)$ and $\bar{C}=(1,0,\cdots,0)$, where
$$\tilde{A}=\setlength\arraycolsep{2pt}\def\arraystretch{2} \left(\begin{matrix} -h_\alpha & h_\beta&0 & 0&\cdots \\ 0 & -h_1 &h_2 & 0& \cdots \\ 0 &0 & -h_2& h_3 & \cdots \\ \vdots& \vdots &\vdots &\ddots &\ddots \end{matrix}\right).$$
Partition $\bar{C}$ as $\bar{C}=(\tilde{C}\ \ 0_{1\times\frac{N+2}{2}})$ and we have
\begin{equation}
\overline{\text{OM}}=\left(
\begin{array}{cc}
\tilde{C}&0\\
0&\tilde{C}\tilde{A}\\
-\tilde{C}\tilde{A}\tilde{A}^T&0\\
0&-\tilde{C}\tilde{A}\tilde{A}^T\tilde{A}\\
\cdots&\cdots\\
\tilde{C}(-\tilde{A}\tilde{A}^T)^\frac{N}{2}&0\\
0&\tilde{C}(-\tilde{A}\tilde{A}^T)^\frac{N}{2}\tilde{A}\\
\end{array}\nonumber\right).
\end{equation}
Hence, it suffices to prove that
$$Q=( \tilde{C},\  -\tilde{C}\tilde{A}\tilde{A}^T, \ \cdots\  \tilde{C}(-\tilde{A}\tilde{A}^T)^\frac{N}{2})^T$$ is non-singular. In fact, $Q$ is lower triangular with the diagonal line $(1,\ h_\beta h_1,\ h_\beta h_1 h_2  h_3,\ \cdots,\ h_\beta \prod_{i=1}^{(N-1)} h_i)$. Therefore $Q$ is non-singular for almost any value of $\{h_i\}$, which indicates $\text{OM}$ is almost always of full rank.

\section{PROOF OF LEMMA \ref{lem3}}\label{APPlem3}
\setlength\arraycolsep{1.5pt}\def\arraystretch{1.5}
Using the PBH test \cite{hespanha 2009}, a system $(A,C)$ is observable if
\begin{equation}\label{eqc1}
\left(\begin{matrix}
C\\
A-\lambda I\\
\end{matrix}\right)
\end{equation}
is always column-full-rank for any complex number $\lambda$. Specifically, we take $\lambda=0$ in (\ref{eqc1}) and prove that the matrix
\begin{equation}\label{eqc2}
\left(\begin{matrix}
C\\
A-0 I\\
\end{matrix}\right)=
\left(
\begin{array}{cccc}
0 & 1 & 0 & \cdots \\
0 & h_\alpha & 0 & \cdots \\
-h_\alpha & 0 & h_\beta & \cdots \\
 0 & -h_\beta &  0& \cdots \\
 \vdots & \vdots & \vdots & \ddots \\
\end{array}\right)
\end{equation}
is rank deficient in column. Since the former two rows are all zeros except the elements in the second column in (\ref{eqc2}), we can remove its former two rows and the second column and consider the remaining sub-matrix
\begin{equation}\label{eqc3}
\left(
\begin{array}{ccccc}
-h_\alpha & h_\beta &0  &0 & \cdots\\
0 & 0 & h_1 & 0&\cdots \\
0& -h_1 &0 & h_2& \cdots\\
\vdots & \vdots & \vdots & \vdots &\ddots\\
\end{array}\right).
\end{equation}
We only need to prove that (\ref{eqc3}) has determinant zero. Denote $Z$ as the matrix obtained by deleting the first row and the first column of (\ref{eqc3}). Then $Z$ is an odd-dimensional antisymmetric matrix, which is always singular because $-\det(Z)=\det(-Z)=\det(Z^T)=\det(Z)$. Hence, the system $(A,C)$ is always unobservable.
\section{PROOF OF LEMMA \ref{lem4}}\label{APPlem4}
\setlength\arraycolsep{1.5pt}\def\arraystretch{1.5}
We prove the conclusions for $N\geq9$. The cases of $N<9$ can be verified straightforwardly.

\setlength\arraycolsep{1.5pt}\def\arraystretch{1.5}
It is clear that $[\bar{P}\ \ \mathbf{p}]$ has the following structure:
\begin{equation}\label{eqd1}
\setlength\arraycolsep{2.4pt}
\left(\begin{array}{ccccccc}
0 & 1 & 0 & 0 & 0 &\cdots & 0\\
-h_\alpha & 0 & h_\beta & 0 & 0 & \cdots & 0\\
0 & * & 0 & h_\beta h_1 & 0 & \cdots & 0\\
* & 0 & * & 0 & h_\beta h_1 h_2 & \cdots & 0\\
\hdotsfor{7} \\
* & 0 & * & 0 & * & \cdots  & h_\beta\prod_{i=1}^{N-1}h_i \\
\end{array}\right)_{(N+1)\times(N+2)},
\end{equation}
where the $*$s are polynomials in $h_i$. Hence, $$\mathbf{p}=(0,\ \cdots,\ 0,\ h_\beta\prod_{i=1}^{N-1}h_i)^T.$$

From $\bar{P}_{1\sigma}\cdot (\bar{P}^{-1})_{\sigma (N+1)}=0$, we know $(\bar{P}^{-1})_{2(N+1)}=0$. Since $(\bar{P}^{-1})_{ij}=[\text{adj}(\bar{P})]_{ij}=(-1)^{i+j}M(\bar{P})_{ji}$, $M(\bar{P})_{(N+1)2}=0$. Then from $\bar{P}_{3\sigma}\cdot (\bar{P}^{-1})_{\sigma (N+1)}=0$, we know $(\bar{P}^{-1})_{4(N+1)}=0$ and $M(\bar{P})_{(N+1)4}=0$. Continuing this procedure, we finally know $(\bar{P}^{-1})_{kn}=M(\bar{P})_{(N+1)k}=0$ for every even $k\leq (N+1)$.


We have $(\bar{P}^{-1})_{1(N+1)}\det(\bar{P})=  \text{adj}(\bar{P})_{1(N+1)}=\\(-1)^{N+2}M(\bar{P})_{(N+1)1}=-M(\bar{P})_{(N+1)1}$. From (\ref{eqd1}), $\bar{P}$ becomes lower triangular after deleting $\bar{P}_{(N+1)\sigma}$ and $\bar{P}_{\sigma1}$. Therefore, $-K=(\bar{P}^{-1})_{1(N+1)}\det(\bar{P})=-M(\bar{P})_{(N+1)1}=-\prod_{j=2}^{N}\prod_{i=2}^j h_i=-\prod_{i=2}^{N} h_i^{(N-1)-i}$.


Now we calculate $M(\bar{P})_{(N+1)k}$ with odd $k\geq1$ through induction on $k$. For $k=1$, $\bar{P}_{2\sigma}\cdot (\bar{P}^{-1})_{\sigma (N+1)}=0=-h_\alpha M(\bar{P})_{(N+1)1}+h_\beta M(\bar{P})_{(N+1)3}$, which implies $M(\bar{P})_{(N+1)3}\\=M(\bar{P})_{(N+1)1}h_\alpha /h_\beta$. Using similar procedures, it can be verified that $M(\bar{P})_{(N+1)k}=M(\bar{P})_{(N+1)1}(h_\alpha\prod_{i=1}^{(k-3)/2} h_{2i-1})/ \\(h_\beta\prod_{i=1}^{(k-3)/2} h_{2i})$ holds for $k=5,7$. Suppose it holds for all odd $k\in [7,m]$, where $m\in[7,N-2]$ is also odd. Let the $(m-1)$th row of $[\bar{P}\ \ \mathbf{p}]$ be
\begin{equation}\label{eqd2}
(t_1,\ 0,\ t_3,\ 0,\ \cdots,\ 0,\ t_{m-2},\ 0,\ h_\beta\prod_{i=1}^{m-3} h_i,\ 0,\ 0).
\end{equation}
Then the $m$th row of $[\bar{P}\ \ \mathbf{p}]$ is (\ref{eqd2}) multiplied by $A$ from the right:
\begin{equation}\label{eqd3}
\begin{array}{rl}
&(0,\ h_\alpha t_1-h_\beta t_3,\ 0,\ h_1t_3-h_2t_5,\ 0, \ \cdots,\\
&\ \ \ \ 0,\ h_{m-4}t_{m-2}-h_{m-3}h_\beta\prod_{i=1}^{m-3}h_i,\ 0,\ h_\beta\prod_{i=1}^{m-2}h_i,\ 0).
\end{array}
\end{equation}
and the $(m+1)$th row of $[\bar{P}\ \ \mathbf{p}]$ is (\ref{eqd3}) multiplied by $A$ from the right:
\begin{equation}\label{eqd4}
\begin{array}{rl}
&\!\!\!\!(-h_\alpha(h_\alpha t_1-h_\beta t_3),\ 0,\ h_\beta(h_\alpha t_1-h_\beta t_3)-h_1(h_1t_3-h_2t_5),\ 0,\\
&\ \cdots,\  0,\ h_{m-3}(h_{m-4}t_{m-2}-h_{m-3}h_\beta\prod_{i=1}^{m-3}h_i)-\\
&h_{m-2}h_\beta\prod_{i=1}^{m-2}h_i,\ 0,\ h_\beta\prod_{i=1}^{m-1}h_i).
\end{array}
\end{equation}

By a straightforward but tedious calculation, we have
\begin{equation}\label{eqd5}
\!\!\!\!\!\!\!\!\!\!\!\!\begin{array}{rl}
& \bar{P}_{(m+1)\sigma}\cdot (\bar{P}^{-1})_{\sigma (N+1)}\det\bar{P}\\
&=\frac{-Kh_{m-2}(h_\alpha\prod_{i=1}^{(m-3)/2}h_{2i-1})}{(h_\beta\prod_{i=1}^{(m-3)/2}h_{2i})(h_\beta\prod_{i=1}^{m-2}h_i)} -(h_\beta\prod_{i=1}^{m-1}h_i)\cdot M(\bar{P})_{(N+1)(m+2)}\\
&=0
\end{array}
\end{equation}
Therefore, $$M(\bar{P})_{(N+1)(m+2)}=\\
 K(h_\alpha\prod_{i=1}^{(m-1)/2}h_{2i-1})/(h_\beta\prod_{i=1}^{(m-1)/2}h_{2i}).$$


Now we consider $\bar{P}_{(N+1)\sigma}\cdot (\bar{P}^{-1})_{\sigma (N+1)}\det\bar{P}=\det\bar{P}$, which equals to the last equation in (\ref{eqd5}) with $m=N$ and $M(\bar{P})_{(N+1)(m+2)}=0$. Hence,
\begin{equation}
\begin{array}{rl}
\det{\bar{P}}&=-K h_{N-2}(h_\alpha\prod_{i=1}^{(N-3)/2}h_{2i-1})/(h_\beta\prod_{i=1}^{(N-3)/2}h_{2i})\\
&\ \ \ \cdot(h_\beta\prod_{i=1}^{N-2}h_i)\\
&=-(h_{\beta}^{N-1}\prod_{i=1}^{N-2}h_i^{(N-1)-i})h_{N-2}(h_\alpha\prod_{i=1}^{(N-3)/2}h_{2i-1})\\
 &\ \ \ \ /(h_\beta\prod_{i=1}^{(N-3)/2}h_{2i})(h_\beta\prod_{i=1}^{N-2}h_i)\\
&=h_\alpha h_{\beta}^{N-1}h_{N-2}^3\prod_{i=1}^{(N-3)/2}h_{2i-1}^{N+2-2i}h_{2i}^{(N-1)-2i}.\\
\end{array}
\end{equation}






\begin{thebibliography}{99}


\bibitem[Bonato \& Berry (2017)]{Bonato2017}
Bonato, C., \& Berry, D. W. (2017).
Adaptive tracking of a time-varying field with a quantum sensor.
\emph{Physical Review A}, 95(5), 052348.

\bibitem[Bonnabel \emph{et al.} (2009)]{Bonnabel2009Observer}
Bonnabel, S., Mirrahimi, M., \& Rouchon, P. (2009).
Observer-based Hamiltonian identification for quantum system.\hskip 1em plus 0.5em minus 0.4em\relax
\emph{Automatica}, 45(5), 1144-1155.


\bibitem[Bris \emph{et al.} (2007)]{Bris2007Hamiltonian}
Bris, C. L., Mirrahimi, M., Rabitz, H., \& Turinici, G (2007).
Hamiltonian identification for quantum systems : well-posedness and numerical approaches.
\emph{ESAIM: Control, Optimisation and Calculus of Variations}, 13(2), 378-395.

\bibitem[Burgarth \& Yuasa (2012)]{Burgarth2012identification}
Burgarth, D. \& Yuasa, K. (2012).
Quantum system identification.
\emph{Physical Review Letters}, 108(8), 080502.


\bibitem [Buchberger (2001)]{BB2001}
Buchberger B. (2001).
Gr$\ddot{\text{o}}$bner basis: a short introduction for systems theorists. 
\emph{Lecture Notes in Computer Science, 2178,  Berlin, Springer.}

\bibitem[Burgarth \& Ajoy (2017)]{Burgarth2017Hamiltonian}
Burgarth, D., \& Ajoy, A. (2017).
Evolution-Free Hamiltonian Parameter Estimation through Zeeman Markers.
\emph{Physical Review Letters}, 119(3), 030402.

\bibitem[Burgarth and Maruyama (2009)]{Burgarth2009indirect}
Burgarth, D., \& Maruyama, K. (2009).
Indirect Hamiltonian identification through a small gateway.\hskip 1em plus 0.5em minus 0.4em\relax
\emph{New Journal of Physics}, 11(10), 103019.


\bibitem[Burgarth \emph{et al.} (2009)]{Burgarth2009Coupling}
Burgarth, D., Maruyama, K., \& Nori, F. (2009).
Coupling strength estimation for spin chains despite restricted access.\hskip 1em plus 0.5em minus 0.4em\relax
\emph{Physical Review A}, 79(2), 020305.	


\bibitem[Campbell and Hamilton (2017)]{rotation2017Campbell}
Campbell, W. C., \& Hamilton, P. (2017).
Rotation sensing with trapped ions.
\emph{Journal of Physics B: Atomic, Molecular and Optical Physics}, 50(6), 064002.

\bibitem[Cappellaro \emph{et al.} (2007)]{Paola2011}
Cappellaro, P., Viola, L., \& Ramanathan, C. (2011).
Coherent-state transfer via highly mixed quantum spin chains.
\emph{Physical Review A}, 83(3), 032304.

\bibitem[Christandl \emph{et al.} (2005)]{Chris2005}
Christandl, M., Datta, N., Dorlas, T. C.,  Ekert, A.,  Kay, A., \& Landahl, A. J. (2005).
Perfect transfer of arbitrary states in quantum spin networks.
\emph{Physical Review A}, 71(3), 032312.

\bibitem[Degen \emph{et al.} (2017)]{sensing2017Degen}
Degen, C. L., Reinhard, F., \& Cappellaro, P. (2017). Quantum Sensing.
\emph{Reviews of Modern Physics}, 89(3), 035002.

\bibitem[Di Franco \emph{et al.} (2008)]{Di2008}
Di Franco, C., Paternostro, M., \& Kim, M. S. (2008).
Perfect state transfer on a spin chain without state initialization.
\emph{Physical Review Letters}, 101(23), 230502.

\bibitem[Di Franco \emph{et al.} (2009)]{Franco2009Hamiltonian}
Di Franco, C., Paternostro, M., \& Kim, M. S. (2009).
Hamiltonian tomography in an access-limited setting without state initialization.
\emph{Physical Review Letters}, 102(18), 187203.

\bibitem[Dong and Petersen (2010)]{Dong2010survey}
Dong, D., \& Petersen, I. R. (2010). Quantum control theory and applications: a survey.\hskip 1em plus 0.5em minus 0.4em\relax
\emph{IET Control Theory \& Applications}, 4(12), 2651-2671.

\bibitem[Dong \emph{et al.} (2019)]{Dong2019learning}
Dong, D., Xing, X., Ma, H., Chen, C., Liu, Z., \& Rabitz, H. (2019). Learning-based quantum robust control: algorithm, applications, and experiments.
\emph{IEEE Transactions on Cybernetics}, in press, DOI:10.1109/TCYB.2019.2921424.


\bibitem[Fu \emph{et al.} (2016)]{Fu2016Hamiltonian}
Fu, Y., Rabitz, H., \& Turinici, G. (2016). Hamiltonian identification in presence of large control field perturbations.
\emph{Journal of Physics A: Mathematical and Theoretical}, 49(49), 495301.


\bibitem[Griffiths \& Schroeter (1995)]{QMGriff}
Griffiths, D.J., \&  Schroeter, D. F. (2018). Introduction to Quantum Mechanics.
\emph{Cambridge University Press.} Cambridge, United Kingdom.

\bibitem[Guo \emph{et al.}, 2019]{yuguo2019vanishing}
	Guo, Y., Shu, C.-C., Dong, D., \& Nori F. (2019).
	Vanishing and Revival of Resonance Raman Scattering.
	\emph{Physical Review Letters}, 123(22), 223202.

\bibitem[Hespanha (2009)]{hespanha 2009}
Hespanha, J. P. (2009). {Linear Systems Theory}. Princeton University Press. 41 William Street Princeton, New Jersey, 08540, USA.

\bibitem[Hou \emph{et al.} (2016)]{hou201614qubit}
Z. Hou, Zhong, H.-S., Tian, Y., Dong, D., Qi, B., Li, L., Wang, Y., Nori, F., Xiang, G.-Y., Li, C.-F., \& Guo, G.-C. (2016). Full reconstruction of a 14-qubit state within four hours.
\emph{New Journal of Physics}, 18(8), 083036.

\bibitem[Jurcevic \emph{et al.} (2014)]{Jurcevic2014manybody}
Jurcevic, P., Lanyon, B. P., Hauke, P., Hempel, C., Zoller, P., Blatt, R., \& Roos, C. F. (2014). Quasiparticle engineering and entanglement propagation in a quantum many-body system.
\emph{Nature}, 511, 202.

\bibitem[Kato \& Yamamoto (2014)]{Yuzuru2014}
Kato, Y., \& Yamamoto, N. (2014). Structure identification and state initialization of spin networks with limited access.
\emph{New Journal of Physics}, 16(2), 023024.

\bibitem[Leghtas \emph{et al.} (2012)]{Leghtas2011Hamiltonian}
Leghtas, Z., Turinici, G., Rabitz, H. \& Rouchon, P. (2012).
Hamiltonian identification through enhanced observability utilizing quantum control.\hskip 1em plus 0.5em minus 0.4em\relax
\emph{IEEE Transactions on Automatic Control}, 57(10), 2679-2683.

\bibitem[Levitt \emph{et al.} (2018)]{Levitt2018Power}
Levitt, M., Gu$\c{t}\check{a}$, M., \& Nurdin, H. I. (2018). Power spectrum identification for quantum linear systems.\hskip 1em plus 0.5em minus 0.4em\relax
\emph{Automatica}, 90, 255-262.

\bibitem[Nakamura \emph{et al.} (2017)]{coherent1999Nakamura}
Nakamura, Y., Pashkin, Yu. A., \& Tsai, J. S. (1999). Coherent control of macroscopic quantum states in a single-Cooper-pair box.
\emph{Nature}, 398, 786.	


\bibitem[Poggiali \emph{et al.} (2018)]{Pog2018}
Poggiali, F., Cappellaro, P., \&  Fabbri, N. (2018). Optimal control for one-qubit quantum sensing.
\emph{Physical Review X}, 8(2), 021059.

\bibitem[Qi \emph{et al.} (2017)]{qibo2017adaptive}
Qi, B., Hou, Z., Wang, Y., Dong, D., Zhong, H., Li, L., Xiang, G.-Y., Wiseman, H. M., Li, C., \& Guo, G. C. (2017).
Adaptive quantum state tomography via linear regression estimation: Theory and two-qubit experiment.\hskip 1em plus 0.5em minus 0.4em\relax
\emph{npj Quantum Information}, 3, 19.

\bibitem[Shi \emph{et al.} (2013)]{sensing2013shi}
Shi, F., Kong, X., Wang, P., Zhao, N., Liu, R. B., \& Du, J. (2013). {Sensing and atomic-scale structure analysis of single nuclear-spin clusters in diamond}.
\emph{Nature Physics}, 10, 21. 	


\bibitem[Shi \emph{et al.} (2016)]{shi2016eaching}
Shi, G., Dong, D., Petersen, I. R., \& Johansson, K. H. (2016). Reaching a quantum consensus: master equations that generate symmetrization and synchronization.\hskip 1em plus 0.5em minus 0.4em\relax
\emph{IEEE Transactions on Automatic Control}, 61, 374-387.

\bibitem[Shu \emph{et al.}, 2020]{chuncun2020Attosecond}
	Shu, C.-C., Guo, Y., Yuan K.-J., Dong, D. \& Bandrauk, A. (2020). Attosecond all-optical control and visualization of quantum interference between degenerate magnetic states by circularly polarized pulses.
	\emph{Optics Letters}, 45(4), 960-963.

\bibitem[Shu \emph{et al.} (2016)]{Shu2016Hamiltonian}
Shu, C.-C., Yuan, K.-J., Dong, D., Petersen, I. R., \& Bandrauk, A. D. (2016). Identifying strong-field effects in indirect photofragmentation reactions.
\emph{Journal of Physical Chemistry Letters}, 8(1), 1-6.


\bibitem[Sone \& Cappellaro (2017)]{Akira}
Sone A., \& Cappellaro P. (2017).
Hamiltonian identifiability assisted by a single-probe measurement.
\emph{Physical Review A}, 95(2), 022335.

\bibitem[Wang \emph{et al.} (2018)]{yuanlong2018TAC}
Wang, Y., Dong, D., Qi, B., Zhang, J., Petersen, I. R., \& Yonezawa, H. (2018).
A quantum Hamiltonian identification algorithm: Computational complexity and error analysis.
\emph{IEEE Transactions on Automatic Control}, 63(5), 1388-1403.

\bibitem[Wang \emph{et al.} (2020)]{yuanlong2020}
Wang, Y., Dong, D., Sone, A., {Petersen}, I. R., {Yonezawa}, H., \& {Cappellaro}, P. (2020).
Quantum Hamiltonian identifiability via a similarity transformation approach and beyond.
\emph{IEEE Transactions on Automatic Control}, in press, doi:10.1109/TAC.2020.2973582; quant-ph, arXiv:1809.02965

\bibitem[Wang \emph{et al.} (2019)]{yuanlong2019Automatica}
Wang, Y., Yin, Q., Dong, D., Qi, B., {Petersen}, I. R., Hou, Z., {Yonezawa}, H., \& Xiang, G.-Y. (2019).
Quantum gate identification: Error analysis, numerical results and optical experiment.\hskip 1em plus 0.5em minus 0.4em\relax
\emph{Automatica}, 101, 269-279.

\bibitem[Wiseman \& Milburn (2010)]{Wiseman2010book}
Wiseman, H. M., \&  Milburn, G. J. (2010).
Quantum Measurement and Control.
\emph{Cambridge University Press.}
Cambridge,  United Kingdom.

\bibitem[Xiang \emph{et al.} (2011)]{xiang2011entangle}
Xiang, G.-Y., Higgins, B., Berry, D.,  Wiseman, H. M., \& Pryde, G. J. (2011).  Entanglement-enhanced measurement of a completely unknown optical phase.\hskip 1em plus 0.5em minus 0.4em\relax
\emph{Nature Photonics}, 5, 43–47.

\bibitem[Yonezawa \emph{et al.} (2012)]{Yonezawa2012quantum}
Yonezawa, H., Nakane, D., Wheatley, T. A., Iwasawa, K., Takeda, S., Arao, H., Ohki, K., Tsumura, K., Berry, D. W., Ralph, T. C., Wiseman, H. M., Huntington, E. H., \& Furusawa, A. (2012).
Quantum-enhanced optical-phase tracking.\hskip 1em plus 0.5em minus 0.4em\relax
\emph{Science}, 337(6101), 1514-1517.

\bibitem[Yu \emph{et al.} (2020)]{qiyu2020IFAC}
Yu, Q., Dong, D., Wang Y. \& Petersen I. R. (2019).
Capability comparison of quantum sensors of single or two qubits for a spin chain system.
\emph{21st IFAC World Congress}, Berlin, Germany, 12-17 Jul., 2020.

\bibitem[Yu \emph{et al.} (2019)]{qiyu2019accessibleset}
Yu, Q., Wang, Y., Dong, D., \& Xiang, G.-Y. (2019).
Generation of accessible sets for a class of quantum spin networks.
\emph{2019 IEEE International Conference on Systems, Man, and Cybernetics (SMC)}, Bari, Italy, 06-09 Oct., 2019, 1160-1165.

\bibitem[Zhang and James (2012)]{guofeng2012survey}
Zhang, G., \& James, M. R. (2012). Quantum feedback networks and control: A brief survey.\hskip 1em plus 0.5em minus 0.4em\relax
\emph{Chinese Science Bulletin}, 57(18), 2200-2214.


\bibitem[Zhang \emph{et al.} (2017)]{zhang2017feedback}
Zhang, J., Liu, Y., Wu, R.-B., Jacobs, K., \& Nori, F. (2017).
Quantum feedback: theory, experiments, and applications.\hskip 1em plus 0.5em minus 0.4em\relax
\emph{Physics Reports}, 679, 1-60.

\bibitem[Zhang \& Sarovar (2014)]{junzhang2014}
Zhang, J., \& Sarovar, M. (2014).
Quantum Hamiltonian identification from measurement time traces.
\emph{Physical Review Letters}, 113(8), 080401.

\bibitem[Zorzi \emph{et al.} (2014)]{zorzi2014minimum}
Zorzi, M., Ticozzi, F., \& Ferrante, A. (2014).
Minimum relative entropy for quantum estimation: Feasibility and general solution.\hskip 1em plus 0.5em minus 0.4em\relax
\emph{IEEE Transactions on Information Theory}, 60(1), 357-367.

\end{thebibliography}
\end{document}